\renewcommand{\epsilon}{\varepsilon}
\renewcommand{\phi}{\varphi}
\newcommand{\overbar}[1]{\mkern 1.5mu\overline{\mkern-1.5mu#1\mkern-1.5mu}\mkern 1.5mu}
\newcommand{\iden}{\mathbb{1}}
\newcommand{\CLDUI}{\mathsf{CLDUI}}
\newcommand{\LDUI}{\mathsf{LDUI}}
\newcommand{\LDOI}{\mathsf{LDOI}}
\newcommand{\MLDUI}[1]{\mathcal{M}_{#1}(\mathbb{C})^{\times 2}_{\mathbb{C}^{#1}}}
\newcommand{\MLDOI}[1]{\mathcal{M}_{#1}(\mathbb{C})^{\times 3}_{\mathbb{C}^{#1}}}
\newcommand{\M}[1]{\mathcal{M}_{#1}(\mathbb{C})}
\newcommand{\C}[1]{\mathbb{C}^{#1}}
\newcommand{\Mreal}[1]{\mathcal{M}_{#1}(\mathbb{R})}
\newcommand{\U}[1]{\mathcal{U}(#1)}
\newcommand{\UU}[1]{\mathcal{U}(#1\otimes #1)}
\renewcommand{\O}[1]{\mathcal{O}(#1)}
\newcommand{\OO}[1]{\mathcal{O}(#1\otimes #1)}
\newtheorem{theorem}{Theorem}[section]
\newtheorem{definition}[theorem]{Definition}
\newtheorem*{definition*}{Definition}
\newtheorem{proposition}[theorem]{Proposition}
\newtheorem{remark}[theorem]{Remark}
\newtheorem{corollary}[theorem]{Corollary}
\newtheorem{lemma}[theorem]{Lemma}
\newtheorem*{conjecture*}{Conjecture}
\theoremstyle{definition}
\definecolor{darkgreen}{rgb}{0,0.392,0}
\author{Satvik Singh}
\email{satviksingh2@gmail.com}
\address{\parbox{\linewidth}{Department of Applied Mathematics and Theoretical Physics, \\ University of Cambridge, Cambridge, United Kingdom.}}
\author{Ion Nechita}
\email{ion.nechita@univ-tlse3.fr}
\address{Laboratoire de Physique Th\'eorique, Universit\'e de Toulouse, CNRS, UPS, France.}
\title[Diagonal unitary and orthogonal symmetries in quantum theory II]{Diagonal unitary and orthogonal symmetries \\ in quantum theory II: evolution operators}
\begin{document}

\begin{abstract}
    We study bipartite unitary operators which stay invariant under the local actions of diagonal unitary and orthogonal groups. We investigate structural properties of these operators, arguing that the diagonal symmetry makes them suitable for analytical study. As a first application, we construct large new families of dual unitary gates in arbitrary finite dimensions, which are important toy models for entanglement spreading in quantum circuits. We then analyze the non-local nature of these invariant operators, both in discrete (operator Schmidt rank) and continuous (entangling power) settings. Our scrutiny reveals that these operators can be used to simulate any bipartite unitary gate via stochastic local operations and classical communication. Furthermore, we establish a one-to-one connection between the set of local diagonal unitary invariant dual unitary operators with maximum entangling power and the set of complex Hadamard matrices. Finally, we discuss distinguishability of unitary operators in the setting of the stated diagonal symmetry. 
\end{abstract}

\maketitle

\tableofcontents

\section{Introduction}

In quantum theory, the time evolution of the state of a closed quantum system is encoded in a unitary operator. This operator, governing the Schr\"odinger equation, inherits the symmetries of the underlying Hamiltonian describing the energy of the system. In an open quantum system setting \cite{breuer2002theory}, when the particular system of interest is coupled to an environment, one needs to consider interaction terms in the Hamiltonian, which lead to a global system-environment unitary evolution operator of a non-product form. Understanding the structure of such bipartite unitary evolution operators is a major theme of foundational quantum theory. 

In this work, we introduce and thoroughly study a new class of bipartite $d\otimes d$ unitary evolution operators $X$ which have the following \emph{invariance property}:
$$\forall U \in \mathcal G,\qquad X = (U \otimes U) X (U \otimes U)^* \quad \text{ or } \quad X = (U \otimes \bar U) X (U \otimes \bar U)^*,$$
where the invariance group $\mathcal G$ is either the \emph{diagonal unitary group} $\mathcal{DU}(d)$ (consisting of diagonal matrices with complex phases on the diagonal) or the \emph{diagonal orthogonal group} $\mathcal{DO}(d)$ (consisting of diagonal sign matrices). In the unitary case, we call the invariant evolution operators \emph{local diagonal unitary invariant} (LDUI) and, respectively, \emph{conjugate local diagonal unitary invariant} (CLDUI). In the orthogonal case, the two conditions above are identical, and we call the operators \emph{local diagonal orthogonal invariant} (LDOI). The invariance property encodes physical symmetries which might be present in the system + environment couple, substantially simplifying the possible evolutions of the total system. Our inspiration comes from the authors' past work \cite{singh2021diagonal}, where the exact same invariance property was required of \emph{quantum states}. The analysis in that work showed that the corresponding class of invariant bipartite quantum states vastly generalized known important examples in quantum information theory, and allowed us to obtain very strong analytical results characterizing the entanglement properties of the states and of the corresponding quantum channels. 

In the current paper, we report similar findings on local diagonal unitary/orthogonal invariant bipartite unitary matrices. This large class of quantum evolution operators enjoys many interesting properties, making it suitable for analytical study and a valuable source for examples. 

We start by providing a detailed analysis of invariant unitary operators, showing how to construct them from elementary building blocks. The presence of this particular structure renders them suitable for concrete analytical study. We then move on to study the effect of fundamental linear operations (such as the partial transposition and the realignment) on our invariant family of operators. This allows us to investigate the intersection of the family of (C)LDUI/LDOI unitary operators with a special class of bipartite unitary gates -- the \emph{dual unitary gates} \cite{bertini2019exact} -- which have been shown to enjoy many interesting properties with regard to entanglement spreading and quantum chaos \cite{bertini2019entanglement}. We are able to construct large, first of their kind families of dual unitary operators in all dimensions, arguing that the class of (C)LDUI/LDOI operators is a rich source of examples in quantum information theory. We then proceed to analyze the non-local properties of (C)LDUI/LDOI unitary operators. In the discrete setting, non locality is characterized by the operator Schmidt rank \cite{Nielson2003opschmidt}. For diagonal invariant unitary matrices, this quantity admits a simple expression, allowing us to not only reprove and strengthen the main results from \cite{Muller2018schmidt}, but to also settle an open question from the same paper. In a nutshell, we prove that an arbitrary bipartite unitary gate can be probabilistically simulated by a real orthogonal LDOI operator via stochastic local operations and classical communication. In the continuous setting, there are several measures of non-locality, such as \emph{operator entanglement}, \emph{entangling power} \cite{Zanardi2000ep}, and \emph{gate typicality} \cite{Jonnadul2017gt}. We derive explicit expressions for all these quantities for (C)LDUI/LDOI unitary matrices. We then establish an intriguing connection between LDUI unitary matrices achieving the maximum entangling power and complex Hadamard matrices, showing that they are in one-to-one correspondence with each other. We finally discuss the problem of distinguishability of (C)LDUI/LDOI unitary matrices. 

Our paper is organized as follows. Section \ref{sec:LDUI-CLDUI-LDOI} contains a review of known results about local diagonal unitary/orthogonal invariant matrices. In Section \ref{sec:LDOI-unitary} we introduce the main classes of bipartite unitary operators that we study, and discuss their general structural properties. Section \ref{sec:dual-unitary} is devoted to the study of invariant dual unitary matrices. Non locality properties are discussed in Sections \ref{sec:non-locality} and \ref{sec:non-locality2}, while distinguishability of the invariant operators is discussed in Section \ref{sec:distinguishability}. We close the paper with a conclusion section, where further directions of study are presented. 

\section{Local diagonal unitary and orthogonal invariant matrices} \label{sec:LDUI-CLDUI-LDOI}

Recently, the authors developed the theory of \emph{local diagonal unitary/orthogonal invariant} bipartite matrices \cite{nechita2021graphical,singh2021diagonal}; this theoretical work already found several applications in quantum information theory \cite{Singh2020entanglement,singh2020ppt2}. Historically, these models of symmetry have been introduced in \cite{chruscinski2006class}, and later studied in \cite{johnston2019pairwise}. Previously, the focus was on quantum states, different notions of positivity, and entanglement properties of bipartite density matrices. In the present work, the focus is shifted on the dynamical aspects: we consider \emph{local diagonal unitary/orthogonal invariant bipartite unitary transformations}. Before studying the specifics of the unitary model, we need to recall the general setting, and the main characterization of local diagonal invariant matrices; we refer to \cite[Section 2]{singh2021diagonal} for the full details. 

We will exclusively be working with the space of $d\times d$ complex matrices $\M{d}$ and the tensor product space $\M{d}\otimes \M{d}$. For a matrix $A\in\M{d}$, $A^\top, \bar{A}$, and $A^\dagger$ denote its transpose, entrywise complex conjugate, and adjoint (conjugate transpose), respectively. The \emph{unitary} groups in $\M{d}$ and $\M{d}\otimes \M{d}$ are denoted by $\U{d}$ and $\UU{d}$, respectively, while the (real) \emph{orthogonal} groups are similarly denoted by $\O{d}$ and $\OO{d}$, respectively.

\begin{definition} \label{def:LDUI-CLDUI-LDOI}
We consider the following local symmetries of matrices $X\in \M{d}\otimes \M{d}$: 
\begin{center}
\centering
{\renewcommand{\arraystretch}{1.2}
\begin{tabular}{|r|l|c|} 
\hline
Acronym    & Symmetry                                   & Condition                                        \\ 
\hline\hline
\emph{LDUI} & local diagonal unitary invariant           & $(U \otimes U) X (U^\dagger \otimes U^\dagger) = X$          \\ 
\hline
\emph{CLDUI} & conjugate local diagonal unitary invariant & $(U \otimes \bar U) X (U^\dagger \otimes U^\top) = X$  \\ 
\hline
\emph{LDOI}  & local diagonal orthogonal invariant        & $(O \otimes O) X (O^\top \otimes O^\top) = X$    \\
\hline
\end{tabular}
}
\end{center}
The conditions above hold for all diagonal unitary matrices $U \in \mathcal{DU}_d$ and diagonal orthogonal matrices $O \in \mathcal{DO}_d$. The subspaces of \emph{LDUI}, \emph{CLDUI} and \emph{LDOI} matrices in $\mathcal{M}_d(\mathbb{C}) \otimes \mathcal{M}_d(\mathbb{C})$ are denoted by $\LDUI_d$, $\CLDUI_d$, and $\LDOI_d$ respectively.
\end{definition}

From the above definition, it is clear that both $\LDUI_d$ and $\CLDUI_d$ are vector subspaces of $\LDOI_d$. Furthermore, it was shown in \cite{singh2021diagonal}
that any $X\in\LDOI_d$ is of the form
 \begin{align}
     X = X^{(3)}_{(A,B,C)} &\coloneqq \includegraphics[align=c,scale=1.2]{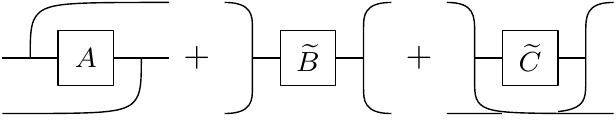} \nonumber \\ 
    &= \sum_{i,j=1}^d A_{ij} \ketbra{ij}{ij} + \sum_{1\leq i\neq j\leq d} B_{ij} \ketbra{ii}{jj} + \sum_{1\leq i\neq j\leq d} C_{ij} \ketbra{ij}{ji}, \label{eq:ABC_LDOI}
 \end{align}
where $A,B,C\in\M{d}$, $\widetilde{B}:=B-\operatorname{diag}B$, $\widetilde{C}:=C-\operatorname{diag}C$, and $\operatorname{diag}A=\operatorname{diag}B=\operatorname{diag}C$. A simple counting of free parameters above implies that $\dim_{\mathbb{C}}\LDOI_d = 3d^2-2d$. 

\begin{remark}
If $B$ (resp. $C$) in Eq.~\eqref{eq:ABC_LDOI} is diagonal, then the resulting family of bipartite matrices form the $\LDUI_d$ (resp. $\CLDUI_d$) subspace. We denote these matrices by
\begin{align} 
    X^{(1)}_{(A,C)} &\coloneqq \includegraphics[scale=1.2,align=c]{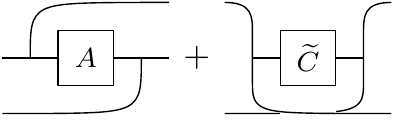} = \sum_{i,j=1}^d A_{ij} \ketbra{ij}{ij} + \sum_{1\leq i\neq j\leq d} C_{ij} \ketbra{ij}{ji},  \\[0.1cm]
    \text{resp. } X^{(2)}_{(A,B)} &\coloneqq \includegraphics[scale=1.2,align=c]{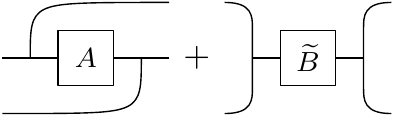} = \sum_{i,j=1}^d A_{ij} \ketbra{ij}{ij} + \sum_{1\leq i\neq j\leq d} B_{ij} \ketbra{ii}{jj} .
\end{align}
\vspace{0.1cm}
Again, a simple counting of free parameters yields $\dim_{\mathbb C}  \CLDUI_d = \dim_{\mathbb C}  \LDUI_d = 2d^2-d$.
\end{remark}

\begin{remark}
In \cite{singh2021diagonal}, the sets of matrix pairs and triples with equal diagonals were denoted as 
\begin{align}
    \MLDUI{d} &\coloneqq \{ (A,B) \in \M{d}\times \M{d} \, \big| \, \operatorname{diag}(A)=\operatorname{diag}(B) \}, \label{eq:MLDUI}\\
    \MLDOI{d} &\coloneqq \{ (A,B,C) \in \M{d}\times \M{d}\times \M{d} \, \big| \, \operatorname{diag}(A)=\operatorname{diag}(B)=\operatorname{diag}(C)\}. \label{eq:MLDOI}
\end{align}
The above sets become vector spaces when equipped with the usual component-wise addition and scalar multiplication. Clearly, the following vector space isomorphisms hold:
\begin{equation*}
    \LDUI_d \cong \CLDUI_d \cong \MLDUI{d} \quad\text{and} \quad \LDOI_d \cong \MLDOI{d}.
\end{equation*}
\end{remark}

\begin{remark}
In this paper, whenever we consider $X^{(3)}_{(A,B,C)}\in \LDOI_d$, the defining matrix triple $(A,B,C)$ is implicitly assumed to lie in $\MLDOI{d}$.
\end{remark}

The $\LDOI_d$ subspace enjoys invariance properties under several operations \cite[Proposition 4.3]{singh2021diagonal}. In addition to transposition and adjoint, we will consider two other operations: realignment and partial transposition, which are defined in Figure~\ref{fig:PT+realign} for $X\in \M{d}\otimes \M{d}$.
\begin{figure}[H]
    \centering
    \includegraphics[scale=1.2]{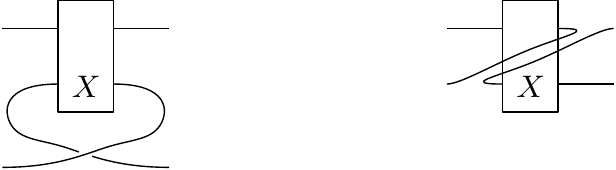}
    \caption{From left to right: the partially transposed $X^{\Gamma} := [\operatorname{id}\otimes \operatorname{transp}](X)$ and the realigned $X^R:=\operatorname{realign}(X)$.}
    \label{fig:PT+realign}
\end{figure}
By defining coordinates $X_{ij,kl}:= \langle ij|X|kl\rangle$, the above definitions are  equivalent to saying
\begin{align*}
        X^\Gamma_{ij, kl} = X_{il, kj} \qquad\text{and}\qquad X^R_{ij, kl} = X_{ik, jl}.
\end{align*}

\begin{proposition}\label{prop:LDOI_symm}
The $\LDOI_d$ subspace is invariant under the operations of transposition, conjugate transposition, realignment, and partial transposition. More precisely, for $X^{(3)}_{(A,B,C)}\in \LDOI_d$, we have 
\begin{alignat*}{2}
    \left(X^{(3)}_{(A,B,C)}\right)^\top &= X^{(3)}_{(A,B^\top,C^\top)} \quad \quad \left(X^{(3)}_{(A,B,C)}\right)^\dagger &&= X^{(3)}_{(\overbar{A},B^\dagger,C^\dagger)} \\
    \left(X^{(3)}_{(A,B,C)}\right)^R &= X^{(3)}_{(B,A,C)} \quad \qquad \left(X^{(3)}_{(A,B,C)}\right)^\Gamma &&= X^{(3)}_{(A,C,B)}.
\end{alignat*}
\end{proposition}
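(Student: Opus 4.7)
The plan is to exploit linearity: each of the four operations (transposition $\top$, adjoint $\dagger$, realignment $R$, and partial transposition $\Gamma$) is linear in $X$, so it suffices to track how it acts on the three families of basis matrices appearing in \eqref{eq:ABC_LDOI}, namely $\ketbra{ij}{ij}$, $\ketbra{ii}{jj}$ (for $i\neq j$), and $\ketbra{ij}{ji}$ (for $i\neq j$). Once the image of each family is determined, one re-indexes the sums and collects terms to read off the new triple $(A',B',C')$, and then checks that $\operatorname{diag}A'=\operatorname{diag}B'=\operatorname{diag}C'$ so that the result genuinely lives in $\LDOI_d$.

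For $\top$ and $\dagger$ the computation is immediate from the identities $(\ket{\alpha}\bra{\beta})^\top=\ket{\beta}\bra{\alpha}$ and $(\ket{\alpha}\bra{\beta})^\dagger=\ket{\beta}\bra{\alpha}$: the first family is fixed, while $\ketbra{ii}{jj}\mapsto\ketbra{jj}{ii}$ and $\ketbra{ij}{ji}\mapsto\ketbra{ji}{ij}$. Swapping the dummy summation indices $i\leftrightarrow j$ then replaces $B_{ij}$ by $B_{ji}$ and $C_{ij}$ by $C_{ji}$ (with an additional entrywise complex conjugation in the $\dagger$ case), yielding precisely the stated $X^{(3)}_{(A,B^\top,C^\top)}$ and $X^{(3)}_{(\overbar{A},B^\dagger,C^\dagger)}$.

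For $R$ and $\Gamma$ the cleanest route is to use the coordinate formulas $X^R_{ij,kl}=X_{ik,jl}$ and $X^\Gamma_{ij,kl}=X_{il,kj}$, together with the fact that $X=X^{(3)}_{(A,B,C)}$ supports non-zero entries only in the three patterns $X_{ab,ab}=A_{ab}$, $X_{aa,cc}=B_{ac}$ (for $a\neq c$), and $X_{ab,ba}=C_{ab}$ (for $a\neq b$). A direct case analysis of which of the three patterns the new indices $(ik,jl)$ or $(il,kj)$ fall into shows that realignment permutes the $A$-type and $B$-type entries while fixing the $C$-type ones, whereas partial transposition permutes the $B$-type and $C$-type entries while fixing the $A$-type ones. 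Reading off the new triples gives $X^{(3)}_{(B,A,C)}$ and $X^{(3)}_{(A,C,B)}$, respectively.

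The only non-computational step is verifying the diagonal-compatibility condition $\operatorname{diag}A'=\operatorname{diag}B'=\operatorname{diag}C'$ in each of the four cases; since every transformation either preserves the diagonal of each of $A,B,C$ or merely permutes the three matrices among themselves, this compatibility is automatic from the hypothesis $\operatorname{diag}A=\operatorname{diag}B=\operatorname{diag}C$. I do not anticipate any real obstacle here; the main risk is a bookkeeping error in tracking which $i\neq j$ restriction belongs to which sum after the operation, so I would be careful when the diagonal portions of $B$ and $C$ (which by convention equal that of $A$) get shuffled into the $A$-slot or vice versa.
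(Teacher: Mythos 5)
Your proof is correct; the paper itself does not reprove this statement but simply imports it from the earlier work (cited there as Proposition 4.3), and the coordinate computation you carry out --- tracking how $\top$, $\dagger$, $R$, $\Gamma$ permute the three support patterns $X_{ab,ab}$, $X_{aa,cc}$, $X_{ab,ba}$ and checking the diagonal-compatibility of the resulting triple --- is exactly the standard verification one would write down (the paper's graphical calculus in Figure~\ref{fig:PT+realign} encodes the same index bookkeeping pictorially). No gaps.
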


The next result \cite[Proposition 4.1]{singh2021diagonal} shows that matrices in $\LDOI_d$ exhibit an interesting block diagonal structure, which will play a crucial role in later sections when we study the intersection of $\LDOI_d$ with the unitary group $\UU{d}$.

\begin{proposition}\label{prop:LDOI_block}
For every $X^{(3)}_{(A,B,C)}\in\LDOI_d$, the following block decomposition holds:
\begin{align*}
    X^{(3)}_{(A,B,C)} &= B \oplus \left( \bigoplus_{i < j} \begin{bmatrix} A_{ij} & C_{ij} \\ C_{ji} & A_{ji} \end{bmatrix} \right).
\end{align*}
\end{proposition}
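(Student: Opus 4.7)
My plan is to verify the block decomposition by exhibiting an explicit $X$-invariant decomposition of $\mathbb{C}^d \otimes \mathbb{C}^d$ compatible with the summands on the right-hand side, and then computing the matrix representation of $X = X^{(3)}_{(A,B,C)}$ on each block. Concretely, set
\[
V_{\mathrm{diag}} := \operatorname{span}\{|kk\rangle : 1 \leq k \leq d\}, \qquad V_{ij} := \operatorname{span}\{|ij\rangle, |ji\rangle\} \text{ for } 1 \leq i < j \leq d,
\]
so that $\mathbb{C}^d \otimes \mathbb{C}^d = V_{\mathrm{diag}} \oplus \bigoplus_{i<j} V_{ij}$. The proof reduces to checking that each summand is $X$-invariant and identifying the induced block.

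First I would apply $X$ to the diagonal basis vectors. Using the explicit expression \eqref{eq:ABC_LDOI} and the fact that $\langle ij|kk\rangle = \delta_{ik}\delta_{jk}$, the middle sum contributes $\sum_{i \neq k} B_{ik} |ii\rangle$, the first sum contributes $A_{kk}|kk\rangle$, and the third sum vanishes because $|ij\rangle\langle ji|kk\rangle = 0$ whenever $i \neq j$. Using $A_{kk} = B_{kk}$, this collapses to $X|kk\rangle = \sum_i B_{ik}|ii\rangle$, so $V_{\mathrm{diag}}$ is $X$-invariant and the restriction is represented in the basis $(|11\rangle,\ldots,|dd\rangle)$ by the matrix $B$ itself.

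Next I would apply $X$ to $|ij\rangle$ with $i \neq j$: the first sum yields $A_{ij}|ij\rangle$, the second vanishes (no pair $(k,k)$ satisfies $|kk\rangle = |ij\rangle$), and the third yields $C_{ji}|ji\rangle$ (from the term with indices $(j,i)$, since $\langle ij|ji\rangle$... wait, that's not it --- here I need to track carefully that $|ab\rangle\langle ba| \cdot |ij\rangle = \delta_{bi}\delta_{aj}|ab\rangle = |ji\rangle$ when $(a,b) = (j,i)$, contributing coefficient $C_{ji}$). Hence $X|ij\rangle = A_{ij}|ij\rangle + C_{ji}|ji\rangle$, and symmetrically $X|ji\rangle = A_{ji}|ji\rangle + C_{ij}|ij\rangle$. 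This shows each $V_{ij}$ is $X$-invariant, and in the ordered basis $(|ij\rangle, |ji\rangle)$ the restriction is represented by the $2\times 2$ matrix $\begin{bmatrix} A_{ij} & C_{ij} \\ C_{ji} & A_{ji} \end{bmatrix}$, matching the claim exactly.

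The argument is essentially a bookkeeping exercise; the only point that needs a moment of care is the column-vs-row convention for the $2 \times 2$ blocks, i.e.\ ensuring that the off-diagonal entries $C_{ij}$ and $C_{ji}$ end up in the correct positions after writing images of basis vectors as columns. No genuine obstacle arises --- the proposition is really a restatement of the sparsity pattern already visible in \eqref{eq:ABC_LDOI}, which couples $|ii\rangle$ only to other diagonal vectors (via $B$) and couples $|ij\rangle$ only to its transpose partner $|ji\rangle$ (via the $A_{ij}, C_{ij}, C_{ji}, A_{ji}$ entries).
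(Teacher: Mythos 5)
Your proposal is correct: the direct computation of $X$ on the basis vectors $|kk\rangle$ and $|ij\rangle$ ($i\neq j$), using the equal-diagonals condition $A_{kk}=B_{kk}$ to absorb the diagonal of $A$ into the $B$-block, is exactly the right verification, and your bookkeeping of the $2\times 2$ blocks (first column $(A_{ij},C_{ji})^\top$, second column $(C_{ij},A_{ji})^\top$) lands the entries in the correct positions. The paper itself gives no proof here, simply citing Proposition 4.1 of the authors' earlier work, so your argument supplies the standard direct verification one would expect.
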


In \cite[Lemma 9.3]{singh2021diagonal}, a bilinear operation $\circ$ was defined on the set $\MLDOI{d}$ of matrix triples with equal diagonals: $(A_1,B_1,C_1) \circ (A_2,B_2,C_2) =(\mathfrak{A},\mathfrak{B},\mathfrak{C})$, where 
\begin{align*}
    \mathfrak{A} &= A_1 A_2, \\
    \mathfrak{B} &= B_1 \odot B_2 + C_1\odot C_2^\top + \operatorname{diag}(A_1 A_2 - 2A_1\odot A_2), \\
    \mathfrak{C} &= B_1\odot C_2 + C_1\odot B_2^\top + \operatorname{diag}(A_1 A_2 - 2A_1\odot A_2).
\end{align*}
This operation corresponds to the composition of \emph{diagonal orthogonal covariant} (DOC) linear maps between matrix algebras \cite[Section 6]{singh2021diagonal}. In the $\LDOI_d$ subspace, this corresponds to the tensor contraction depicted in Figure \ref{fig:composition-X-LDOI}, top panel. We now introduce a new bilinear operation on $\MLDOI{d}$ triples, corresponding to the (usual) matrix multiplication of the corresponding matrices in $\LDOI_d$.

\begin{definition}\label{def:product-MLDOI}
    For two triples $(A_1,B_1,C_1), (A_2, B_2, C_2) \in \MLDOI{d}$, define their \emph{product} 
    $$\MLDOI{d} \ni (A,B,C):= (A_1,B_1,C_1) \, \cdot \, (A_2, B_2, C_2)$$ 
    by
    \begin{align*}
    A &= A_1 \odot A_2 + C_1\odot C_2^\top + \operatorname{diag}(B_1 B_2 - A_1\odot A_2 - C_1 \odot C_2^\top), \quad B = B_1 B_2, \\
    \text{and}\quad C &= A_1\odot C_2 + C_1\odot A_2^\top + \operatorname{diag}(B_1 B_2 - A_1\odot C_2 - C_1\odot A_2^\top).
\end{align*}
\end{definition}

\begin{remark}
The diagonal matrices in the above formulas are there to ensure that $A,B$, and $C$ have equal diagonals. These terms can be replaced by the simpler expression $\operatorname{diag}(B_1 B_2 - 2 B_1\odot B_2)$.
\end{remark}

The product ``$\cdot$'' from Definition \ref{def:product-MLDOI} corresponds to the usual matrix multiplication of LDOI matrices (see Figure \ref{fig:composition-X-LDOI}, bottom panel). 

\begin{lemma}\label{lemma:LDOIprod}
Let $X^{(3)}_{(A_1,B_1,C_1)}, X^{(3)}_{(A_2,B_2,C_2)}\in \LDOI_d$. Then, their matrix product
$$X^{(3)}_{(A_1,B_1,C_1)} X^{(3)}_{(A_2,B_2,C_2)}=X^{(3)}_{(A,B,C)},$$ 
where $(A,B,C)= (A_1,B_1,C_1)  \cdot (A_2, B_2, C_2)$.
\end{lemma}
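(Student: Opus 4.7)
My plan is to exploit the block decomposition from Proposition \ref{prop:LDOI_block} rather than multiplying the bulky sum-of-dyads expressions directly. Under the orthogonal splitting $\mathbb{C}^d \otimes \mathbb{C}^d = V_0 \oplus \bigoplus_{i<j} V_{ij}$, with $V_0 := \operatorname{span}\{\ket{ii}\}_{i=1}^d$ and $V_{ij} := \operatorname{span}\{\ket{ij},\ket{ji}\}$, every $\LDOI_d$ matrix is block diagonal, so the product of two $\LDOI_d$ matrices is automatically block diagonal and therefore lies in $\LDOI_d$. The task then splits into two independent, much easier computations, one per block type.

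On $V_0$, in the basis $\{\ket{ii}\}_{i=1}^d$, each factor $X^{(3)}_{(A_k,B_k,C_k)}$ acts simply as $B_k$, so the product acts as $B_1 B_2$. This immediately gives $B = B_1 B_2$ and, simultaneously, forces the common diagonal of the resulting triple to be $\operatorname{diag}(A)=\operatorname{diag}(B)=\operatorname{diag}(C)=\operatorname{diag}(B_1 B_2)$. On each two-dimensional block $V_{ij}$ with $i<j$, in the ordered basis $(\ket{ij},\ket{ji})$, each factor is a $2\times 2$ matrix with entries $\{(A_k)_{ij},(C_k)_{ij},(C_k)_{ji},(A_k)_{ji}\}$, and a routine $2\times 2$ multiplication reads off the off-diagonal entries of the resulting $A$ and $C$ at the positions $(i,j)$ and $(j,i)$. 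Recognising Hadamard products and transposes, this step yields $A_{ij} = (A_1 \odot A_2 + C_1\odot C_2^\top)_{ij}$ and $C_{ij} = (A_1\odot C_2 + C_1\odot A_2^\top)_{ij}$ for all $i\neq j$.

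Matching the two outputs against Definition \ref{def:product-MLDOI} completes the proof: the $\operatorname{diag}(\cdot)$ summands in the stated formulas for $A$ and $C$ are precisely the corrections that overwrite the (irrelevant) diagonals of $A_1\odot A_2 + C_1\odot C_2^\top$ and $A_1\odot C_2 + C_1\odot A_2^\top$ with $\operatorname{diag}(B_1 B_2)$, thereby enforcing the equal-diagonals condition and landing the triple inside $\MLDOI{d}$. I do not expect any real obstacle; the argument is essentially mechanical once one observes that the block decomposition reduces a bilinear identity in $\M{d}\otimes\M{d}$ to separate identities on $V_0$ and on each $V_{ij}$. The only point needing a line of verification is that the diagonal corrections agree with the value $\operatorname{diag}(B_1 B_2)$ dictated by the $V_0$ block, which is built into the definition (and, as the remark after Definition \ref{def:product-MLDOI} notes, may be rewritten as $\operatorname{diag}(B_1 B_2 - 2 B_1\odot B_2)$).
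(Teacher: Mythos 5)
Your proof is correct. The paper itself offers no argument beyond the one-line assertion that the identity ``can be checked either graphically or by using the formula in Eq.~\eqref{eq:ABC_LDOI}'', i.e.\ a direct computation with the sum-of-dyads expression; you instead route the verification through the block decomposition of Proposition~\ref{prop:LDOI_block}, which is a genuinely cleaner organization of the same check. Since block-diagonality with respect to $V_0 \oplus \bigoplus_{i<j} V_{ij}$ exactly characterizes $\LDOI_d$ (the dimensions $d^2 + 4\binom{d}{2} = 3d^2-2d$ match), closure under multiplication is immediate, and the identity reduces to one matrix product on $V_0$ (giving $B = B_1B_2$) plus an independent $2\times 2$ product on each $V_{ij}$ (giving the off-diagonal entries of $A$ and $C$ as the stated Hadamard-product combinations); I verified all four entries of the $2\times 2$ product against Definition~\ref{def:product-MLDOI} and they agree, as does your observation that the $\operatorname{diag}(\cdot)$ terms merely reset the diagonals to $\operatorname{diag}(B_1B_2)$ to keep the triple in $\MLDOI{d}$. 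What your approach buys is that the bookkeeping of which dyads $\ketbra{ij}{kl}$ compose nontrivially --- the only fiddly part of the direct verification --- is done once and for all by the block decomposition; what the paper's suggested direct (or graphical) computation buys is independence from Proposition~\ref{prop:LDOI_block}, though that proposition is itself an elementary consequence of Eq.~\eqref{eq:ABC_LDOI}, so nothing circular arises either way.
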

\begin{proof}
The result can be checked either graphically or by using the formula in Eq.~\eqref{eq:ABC_LDOI}. 
\end{proof}

\begin{figure}[H]
    \centering
    \includegraphics[scale=1.1]{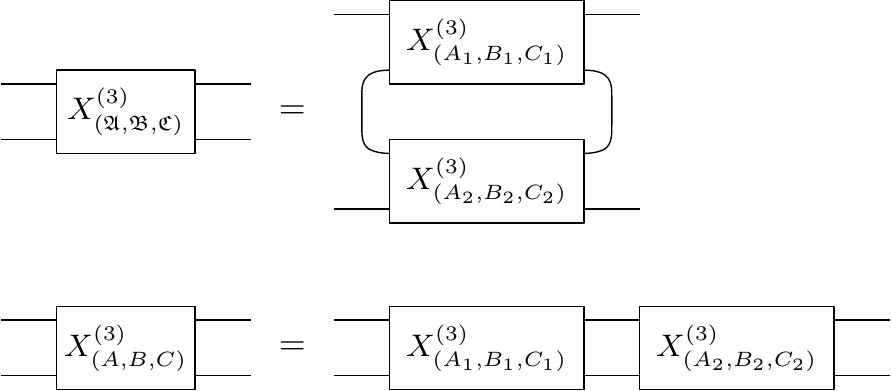}
    \caption{Two binary operations on $\MLDOI{d}$. Top: composition of DOC maps. Bottom: matrix multiplication.}
    \label{fig:composition-X-LDOI}
\end{figure}

\section{Unitary and orthogonal LDOI matrices}\label{sec:LDOI-unitary}

Having introduced the subspaces of local diagonal unitary and orthogonal invariant matrices in the previous section, we shall focus now on matrices which have additional structure: they should be unitary (or orthogonal). Studying unitary operators is motivated by quantum theory, where they model the time evolution of quantum states. The bipartite case is of great importance, since unitary operators acting on a tensor product space model the interaction between two quantum systems. The other extra structure relevant for quantum theory, that of positivity (used to model open quantum systems and density matrices), has been thoroughly investigated in \cite{singh2021diagonal}. 

We start with a simple result, namely the characterization of unitary (resp.~orthogonal) matrices within the $\LDOI_d$ subspace. We will denote the unit circle in the complex plane by $$\mathbb{T}:=\{z\in\mathbb{C} \,\,\vert\,\, |z|=1 \}.$$

\begin{proposition}\label{prop:unitary-condition}
An \emph{LDOI} matrix $X^{(3)}_{(A,B,C)}$ is unitary if and only if
\begin{itemize}
    \item $B$ is unitary,
    \item $\forall i < j$, there exists a phase $\omega_{ij} \in \mathbb T$ such that $A_{ji} = \omega_{ij} \overline{A_{ij}} \text{ and } C_{ji} = - \omega_{ij} \overline{C_{ij}},$
    \item $\forall i < j, \,\, |A_{ij}|^2 + |C_{ij}|^2 = 1$.
\end{itemize}
Orthogonal \emph{LDOI} matrices admit precisely the same characterization, with the field of complex numbers replaced by that of real numbers (and phases $\omega \in \mathbb T$ replaced by signs $\theta = \pm 1$).
\end{proposition}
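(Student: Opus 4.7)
The plan is to reduce everything to the block decomposition given by Proposition \ref{prop:LDOI_block}:
\[
    X^{(3)}_{(A,B,C)} = B \oplus \bigoplus_{i<j} M_{ij}, \qquad \text{where } M_{ij} := \begin{bmatrix} A_{ij} & C_{ij} \\ C_{ji} & A_{ji} \end{bmatrix}.
\]
Since a direct sum of square matrices is unitary iff each summand is unitary, $X^{(3)}_{(A,B,C)} \in \UU{d}$ iff $B$ is unitary and each $M_{ij}$ (for $i<j$) is unitary. This immediately accounts for the first bullet, so the whole content of the proposition is a characterization of unitarity for the $2 \times 2$ blocks $M_{ij}$.

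Next I would invoke the standard parametrization of $\U{2}$: a matrix $\begin{bsmallmatrix} a & b \\ c & d \end{bsmallmatrix}$ is unitary iff $|a|^2+|b|^2=1$ and there exists $\omega \in \mathbb{T}$ with $c = -\omega \overline{b}$ and $d = \omega \overline{a}$. Applied to $M_{ij}$ with $a = A_{ij}$, $b = C_{ij}$, this directly produces the remaining two bullets. To keep the argument self-contained, I would derive the parametrization in one line from the row/column orthonormality equations $\overline{a}b+\overline{c}d=0$, $|a|^2+|c|^2 = |b|^2+|d|^2 = 1 = |a|^2+|b|^2 = |c|^2+|d|^2$, which force $|a|=|d|$ and $|b|=|c|$, define the phase $\omega_{ij} := A_{ji}/\overline{A_{ij}}$ when $A_{ij} \neq 0$, and read off the constraint on $C_{ji}$ from the off-diagonal orthogonality relation.

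The only mild wrinkle is the degenerate case $A_{ij}=0$ (then $A_{ji}=0$ and $|C_{ij}|=|C_{ji}|=1$, so one just picks $\omega_{ij} := -C_{ji}/\overline{C_{ij}} \in \mathbb{T}$, and the identity $A_{ji} = \omega_{ij}\overline{A_{ij}}$ is trivially $0=0$); the analogous case $C_{ij}=0$ is symmetric. I expect this bookkeeping to be the only nonroutine point, but it is more cosmetic than substantive.

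For the orthogonal statement, the same proof runs verbatim after replacing $\M{d}$-entries by real numbers and restricting all phases $\omega_{ij}$ to $\{\pm 1\}$; the block decomposition of Proposition \ref{prop:LDOI_block} is algebraic and is insensitive to the underlying field, and the $\O{2}$ parametrization is just the real specialization of the $\U{2}$ one. Hence no new argument is needed.
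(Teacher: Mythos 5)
Your proof is correct, but it takes a different route from the paper's own argument. The paper proves the proposition by computing the product $X^{(3)}_{(A,B,C)}\bigl(X^{(3)}_{(A,B,C)}\bigr)^\dagger$ via the triple product ``$\cdot$'' on $\MLDOI{d}$ (Definition \ref{def:product-MLDOI} and Lemma \ref{lemma:LDOIprod}), reducing unitarity to the system $(A,B,C)\cdot(\bar A, B^\dagger, C^\dagger)=(\mathbb J_d,\iden_d,\iden_d)$, and then extracting the phase $\omega_{ij}$ by a short case analysis on the relations $|A_{ij}|^2+|C_{ij}|^2=1$ and $A_{ij}\overline{C_{ji}}=-C_{ij}\overline{A_{ji}}$. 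You instead go straight through the block decomposition of Proposition \ref{prop:LDOI_block} and the standard parametrization of $\U 2$; this is precisely the ``more intuitive understanding'' that the authors themselves describe in the paragraph immediately following their proof, so your argument is in effect an independent verification along the route the paper only sketches. Each approach buys something: the paper's computation showcases the newly introduced product on matrix triples (a tool reused later, e.g.\ in Proposition \ref{prop:ELDOI}), while yours is shorter, makes the group isomorphism $\UU{d}\cap\LDOI_d\cong\U d\times\bigtimes_{i<j}\U 2$ transparent, and handles the real orthogonal case with literally no extra work. One small simplification you could make: taking $\omega_{ij}:=\det M_{ij}$ gives the phase uniformly and removes the need for the degenerate-case bookkeeping when $A_{ij}=0$ or $C_{ij}=0$, since $M^{-1}=M^\dagger$ combined with the cofactor formula for the inverse yields $A_{ji}=\omega_{ij}\overline{A_{ij}}$ and $C_{ji}=-\omega_{ij}\overline{C_{ij}}$ in all cases at once.
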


\begin{proof}
We first note that the identity matrix $\iden_d \otimes \iden_d = X^{(3)}_{(\mathbb J_d, \iden_d, \iden_d)}$, where $\mathbb{J}_d\in \M{d}$ is the all ones matrix. Moreover, since $\left(X^{(3)}_{(A,B,C)}\right)^\dagger = X^{(3)}_{(\bar A, B^\dagger, C^\dagger)}$ (see Proposition~\ref{prop:LDOI_symm}), we can write
\begin{align}
    X^{(3)}_{(A,B,C)}\in \UU{d} &\iff X^{(3)}_{(A,B,C)} \left(X^{(3)}_{(A,B,C)}\right)^\dagger = \iden_d \otimes \iden_d \nonumber \\
    &\iff (A,B,C) \cdot (\bar A, B^\dagger, C^\dagger) = (\mathbb J_d, \mathbb 1_d, \mathbb 1_d).
\end{align}
By Definition~\ref{def:product-MLDOI}, this amounts to saying that $B\in \U{d}$ and 
\begin{align}\label{eq:prop3.1}
\forall i \neq j, \qquad |A_{ij}|^2 + |C_{ij}|^2 = 1 \quad \text{and} \quad A_{ij} \overbar{C_{ji}}  = - C_{ij} \overbar{A_{ji}}.  
\end{align}
Let us now fix some $i \neq j$. If both $A_{ij}$ and $C_{ij}$ are non-zero, we have 
\begin{equation}
\frac{\overbar{A_{ji}}}{A_{ij}} = - \frac{\overbar{C_{ji}}}{C_{ij}} =:z, \text{ for some } z \in \mathbb C.    
\end{equation}
When combined with the other condition in Eq.~\eqref{eq:prop3.1}, this yields $|z|=1$, and the claim follows. If, say $A_{ij} = 0$, then $|C_{ij}| = 1$. Hence, $A_{ji} = 0$; again, the claim follows. 
\end{proof}

\begin{remark}\label{remark:unitary_LDUI}
From Proposition~\ref{prop:unitary-condition}, it is clear that $X^{(1)}_{(A,C)}\in \LDUI_d$ is unitary if and only if 
\begin{itemize}
    \item $\forall i\in [d], \,\, A_{ii}=C_{ii}\in \mathbb{T}$,
    \item $\forall i < j$, there exists a phase $\omega_{ij} \in \mathbb T$ such that $A_{ji} = \omega_{ij} \overline{A_{ij}} \text{ and } C_{ji} = - \omega_{ij} \overline{C_{ij}},$
    \item $\forall i<j, \,\, |A_{ij}|^2 + |C_{ij}|^2 = 1$.
\end{itemize}
Similarly, $X^{(2)}_{(A,B)}\in \CLDUI_d$ is unitary if and only if $B$ is unitary and $A_{ij} \in \mathbb T$ for all $i\neq j$.
\end{remark}

Perhaps a more intuitive understanding of the above result can be obtained by analyzing the block structure of matrices in $\LDOI_d$. Indeed, from Proposition~\ref{prop:LDOI_block}, it is obvious that
\begin{align*}
    X^{(3)}_{(A,B,C)} = B \oplus \left( \bigoplus_{i < j} \begin{bmatrix} A_{ij} & C_{ij} \\ C_{ji} & A_{ji} \end{bmatrix} \right) &\in \UU{d} \iff B\in \U{d} \,\text{ and } \begin{bmatrix} A_{ij} & C_{ij} \\ C_{ji} & A_{ji} \end{bmatrix} \in \U{2} \,\,\,\forall i<j.
\end{align*}
Note that the conditions on $A,C$ given in Proposition~\ref{prop:unitary-condition} are precisely those which ensure that the $2\times 2$ matrices in the above block decomposition are unitary. If we consider $\UU{d}\cap \LDOI_d$ as a subgroup of the full unitary group $\UU{d}$, it is clear that the following group isomorphism holds:
\begin{equation}
    \UU{d}\cap \LDOI_d \,\,\cong \,\,\U{d} \times \bigtimes_{1\leq i<j\leq d} \U{2}. 
\end{equation}
Similar group isomorphisms can be easily obtained for the $\LDUI_d$ and $\CLDUI_d$ classes as well.

\begin{align}
    \UU{d}\cap \LDUI_d \,\,&\cong \,\,\bigtimes_{i=1}^d \U{1} \times \bigtimes_{1\leq i<j\leq d} \U{2}, \\
    \UU{d}\cap \CLDUI_d \,\,&\cong \,\,\U{d} \times \bigtimes_{1\leq i<j\leq d} \U{1}. 
\end{align}

\section{Dual unitary LDOI matrices}\label{sec:dual-unitary}

Dual unitary matrices have been considered recently in relation to various problems in many body physics. Basically, if the nearest neighbor interactions in certain 1+1 many body lattice models are facilitated by dual unitary operators, then correlations between initially localized observables can be explicitly and analytically computed \cite{bertini2019exact,piroli2020exact}. There are several related sets of bipartite unitary operators, relevant to other fields, such as unitary matrices which are still unitary after \emph{partial transposition} \cite{deschamps2016some,benoist2017bipartite}, and \emph{perfect unitaries} \cite{pastawski2015holographic} or \emph{absolutely maximally entangled (AME) states} \cite{helwig2012absolute}. 
 
In this section, we describe the intersection of these special classes of unitary matrices with the $\LDUI_d, \CLDUI_d,$ and $\LDOI_d$ subspaces. We shall see that these properties are translated into certain non-trivial constraints that the $A,B,C$ matrices have to satisfy. We are able to explicitly construct families of dual unitary operators having extra local symmetries, providing several examples of such evolution operators. Let us start by recalling the definitions of the particular classes of bipartite unitary operators that are of interest to us. 

\begin{definition}
    A bipartite unitary operator $U \in \UU{d}$ is called:
    \begin{itemize}
        \item \emph{dual} if its realignment is also a unitary operator: $U^R \in \UU{d}$.
        \item \emph{PT} if its partial transpose is also a unitary operator: $U^\Gamma \in \UU{d}$.
        \item \emph{perfect} if it is both dual and PT: $U^R, U^\Gamma \in \UU{d}$.
    \end{itemize}
    We recall that the realignment and the partial transposition are defined in Figure~\ref{fig:PT+realign}.
    
\end{definition}

Recall from Proposition~\ref{prop:LDOI_symm} that the LDOI subspace is invariant under the operations of realignment and partial transposition:
$$\left(X^{(3)}_{(A,B,C)}\right)^R = X^{(3)}_{(B,A,C)} \quad \text{and} \quad \left(X^{(3)}_{(A,B,C)}\right)^\Gamma = X^{(3)}_{(A,C,B)}.$$
In what follows, we shall focus on the case of dual unitary operators; the case of PT unitaries is very similar, and can be easily deduced from the results in the dual case, see Remark~\ref{remark:dual-PT}.

\begin{proposition}\label{prop:dual}
An \emph{LDOI} matrix $X^{(3)}_{(A,B,C)}$ is dual unitary if and only if
\begin{itemize}
    \item $A$ and $B$ are unitary,
    \item $\forall i<j$, $|A_{ij}|^2 = |B_{ij}|^2 = 1-|C_{ij}|^2$,
    \item $\forall i<j$, there exist complex phases $\omega_{ij}\in \mathbb T$ such that
    $$A_{ji} = \omega_{ij} \overline{A_{ij}}, \qquad B_{ji} = \omega_{ij} \overline{B_{ij}}, \qquad C_{ji} = -\omega_{ij} \overline{C_{ij}}.$$
\end{itemize}
In particular, an \emph{LDUI} matrix $X^{(1)}_{(A,C)}$ is dual unitary if and only if
\begin{equation*}
    \forall i,j\in [d], \,\,\, C_{ij}\in\mathbb{T} \quad\text{and}\quad A = \operatorname{diag}(C).
\end{equation*}
\emph{CLDUI} dual unitary matrices $X^{(2)}_{(A,B)}$ do not exist for $d \geq 3$, while for $d=2$, they are given by $A,B$ with $\operatorname{diag}A=\operatorname{diag}B=0$ and $A_{ij},B_{ij}\in\mathbb{T}$ for $i\neq j$.
\end{proposition}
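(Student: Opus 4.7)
The plan is to reduce dual unitarity of $X = X^{(3)}_{(A,B,C)}$ to two simultaneous unitarity constraints via Proposition \ref{prop:LDOI_symm}, which gives $X^R = X^{(3)}_{(B,A,C)}$. Thus $X$ is dual unitary if and only if Proposition \ref{prop:unitary-condition} applies \emph{both} to the triple $(A,B,C)$ and to the triple $(B,A,C)$, so I would simply read off and merge the two sets of conditions. A completely equivalent formulation uses the block diagonal form from Proposition \ref{prop:LDOI_block}: dual unitarity amounts to $B \in \U{d}$ together with unitarity of every $2\times 2$ block $\begin{bmatrix}A_{ij}&C_{ij}\\C_{ji}&A_{ji}\end{bmatrix}$, \emph{and} $A \in \U{d}$ together with unitarity of every $2\times 2$ block $\begin{bmatrix}B_{ij}&C_{ij}\\C_{ji}&B_{ji}\end{bmatrix}$.

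Applying Proposition \ref{prop:unitary-condition} to $X$ produces, for each $i<j$, a phase $\omega_{ij} \in \mathbb{T}$ with $A_{ji} = \omega_{ij}\overline{A_{ij}}$, $C_{ji} = -\omega_{ij}\overline{C_{ij}}$, and $|A_{ij}|^2 + |C_{ij}|^2 = 1$. Applying it to $X^R$ yields $A \in \U{d}$ and an analogous phase $\omega'_{ij}$ controlling the pair $(B,C)$, with $|B_{ij}|^2 + |C_{ij}|^2 = 1$. Combining the two moduli conditions gives $|A_{ij}| = |B_{ij}| = \sqrt{1-|C_{ij}|^2}$. Whenever $C_{ij} \neq 0$, the two relations $C_{ji} = -\omega_{ij}\overline{C_{ij}} = -\omega'_{ij}\overline{C_{ij}}$ force $\omega_{ij} = \omega'_{ij}$, so one can declare a single common phase as in the statement. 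The converse is then immediate: the stated conditions make every $2\times 2$ block of $X$ and of $X^R$ unitary, so Proposition \ref{prop:LDOI_block} delivers unitarity of both matrices.

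For the LDUI specialization, $B$ diagonal combined with $B \in \U{d}$ forces $B_{ij}=0$ for $i\neq j$ and $|B_{ii}|=1$. Feeding this into $|B_{ij}|^2 + |C_{ij}|^2 = 1$ gives $|C_{ij}|=1$ for all $i\neq j$, after which $|A_{ij}|^2 + |C_{ij}|^2 = 1$ forces $A_{ij}=0$ for $i\neq j$. The equal-diagonals constraint on $\MLDOI{d}$ triples then yields $A = \operatorname{diag}(C)$, and unitarity of $A$ gives $|C_{ii}|=1$, establishing $C_{ij}\in\mathbb{T}$ for all $i,j$. For the CLDUI case, $C$ diagonal gives $|A_{ij}|=|B_{ij}|=1$ for $i\neq j$. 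Unitarity of row $i$ of $A$ then demands $|A_{ii}|^2 + (d-1) = 1$, i.e.\ $|A_{ii}|^2 = 2-d$; this rules out $d\geq 3$, and for $d=2$ forces $A_{ii}=0$, whence $B_{ii}=C_{ii}=0$ by the equal-diagonal requirement, and all remaining constraints are automatically met.

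The main subtlety I anticipate is the degenerate case $C_{ij}=0$ in the general LDOI step: there the phase $\omega_{ij}$ is not pinned down by $C$, so one has to verify that the independent block unitarities on $(A,C)$ and $(B,C)$ can always be reconciled with a single choice of $\omega_{ij}$ (or interpret the statement as allowing any convenient choice in that trivial sub-case). All other steps are bookkeeping on the block decomposition and the constraint $\operatorname{diag}(A)=\operatorname{diag}(B)=\operatorname{diag}(C)$.
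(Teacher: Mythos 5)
Your proposal takes exactly the route the paper does: dual unitarity of $X^{(3)}_{(A,B,C)}$ is the conjunction of Proposition~\ref{prop:unitary-condition} applied to $(A,B,C)$ and to the realigned triple $(B,A,C)$, and the three bullet points are obtained by merging the two resulting condition sets; the LDUI and CLDUI specializations are also argued the same way (your counting $|A_{ii}|^2+(d-1)=1$ is the paper's ``each column of $A$ has norm one and at least two off-diagonal elements of modulus one'').

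The one point worth dwelling on is the degeneracy you flag at $C_{ij}=0$, and your instinct there is right: the two phases $\omega_{ij}$ (pinned by $A$) and $\omega'_{ij}$ (pinned by $B$) genuinely need \emph{not} coincide, so the ``only if'' direction of the third bullet fails in that sub-case. A concrete witness already lives in the paper's own CLDUI $d=2$ family: take $A=\left[\begin{smallmatrix}0&1\\1&0\end{smallmatrix}\right]$, $B=\left[\begin{smallmatrix}0&1\\i&0\end{smallmatrix}\right]$, $C=0$. Both $X^{(3)}_{(A,B,C)}$ and its realignment $X^{(3)}_{(B,A,C)}$ are unitary (all blocks in Proposition~\ref{prop:LDOI_block} are diagonal unitaries), yet $A_{21}=\omega\overline{A_{12}}$ forces $\omega=1$ while $B_{21}=\omega\overline{B_{12}}$ forces $\omega=i$. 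So the common-phase condition is sufficient but not necessary when some $C_{ij}$ vanishes; the paper's proof is silent on this, and its statement should be read as imposing the phase relations only through whichever entries are nonzero. Apart from correctly identifying (but not resolving) this edge case --- which is an imprecision of the proposition itself rather than of your argument --- your proof matches the paper's.
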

\begin{proof}
The main claim follows from Proposition \ref{prop:unitary-condition} by demanding that both 
$$X^{(3)}_{(A,B,C)} \quad\text{and}\quad \left( X^{(3)}_{(A,B,C)}\right)^R = X^{(3)}_{(B,A,C)}$$ are unitary. In the LDUI case, the unitary matrix $B$ must be diagonal, hence $B_{ii} \in \mathbb T$ for all $i \in [d]$. Since $|A_{ij}|^2=|B_{ij}|^2$ for all $i\neq j$, the same must hold for the unitary matrix $A$. The only restriction on the off-diagonal elements of $C$ is that $|C_{ij}| = 1$, which proves the claim. 

In the CLDUI case, when $d \geq 3$, a similar line of reasoning shows that the off-diagonal entries of $A$ and $B$ must have modulus one; this is impossible, since each column of $A$ (or $B$) has norm one and at least two off-diagonal elements. We leave the proof of the $2 \times 2$ case to the reader. 
\end{proof}

The canonical example of a dual unitary operator is the \emph{swap} gate $S\in \UU{d}$. It can be easily verified that $S=X^{(1)}_{(A,C)}\in \LDUI_d$, where $C=\mathbb{J}_d$ is the all ones matrix and $A=\operatorname{diag}C$.
\begin{equation}\label{eq:swap}
   \forall \ket{\psi},\ket{\phi}\in \C{d}: \qquad S(\ket{\psi}\otimes \ket{\phi}) = \ket{\phi}\otimes \ket{\psi}.
\end{equation}
\begin{figure}[H]
    \centering
    \includegraphics[scale=1.2]{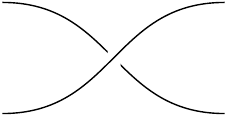}
    \caption{Graphical representation of the swap gate.}
    \label{fig:swap}
\end{figure}

\begin{remark}\label{remark:dual-PT}
By using the block structure of \emph{LDOI} matrices (see Proposition~\ref{prop:LDOI_block}), we can restate the above result as follows. An LDOI matrix $X^{(3)}_{(A,B,C)}$ is dual unitary if and only if
\begin{itemize}
    \item $A$ and $B$ are unitary, 
    \item $\forall i < j$, the $2 \times 2$ matrices
    $\begin{bmatrix}
    A_{ij} & C_{ij} \\
    C_{ji} & A_{ji}
    \end{bmatrix}, \begin{bmatrix}
    B_{ij} & C_{ij} \\
    C_{ji} & B_{ji}
    \end{bmatrix}$ are unitary.
\end{itemize}
Similarly, an \emph{LDOI} matrix $X^{(3)}_{(A,B,C)}$ is \emph{PT} unitary if and only if
\begin{itemize}
    \item $B$ and $C$ are unitary, 
    \item $\forall i < j$, the $2 \times 2$ matrices
    $\begin{bmatrix}
    A_{ij} & C_{ij} \\
    C_{ji} & A_{ji}
    \end{bmatrix}, \begin{bmatrix}
    A_{ij} & B_{ij} \\
    B_{ji} & A_{ji}
    \end{bmatrix}$ are unitary.
\end{itemize}
\end{remark}

While the conditions given in Proposition~\ref{prop:dual} completely characterize the class of dual unitary LDOI matrices, it is difficult to explicitly find matrices $A,B,C\in\M{d}$ that actually fulfill these conditions. This has been a recurring theme in the theory of dual unitary matrices: except for the $d=2$ case (where a complete characterization of dual unitary matrices has been derived \cite{bertini2019exact}), very few explicit examples of dual unitary matrices are known in higher dimensions \cite{arul2020dual, arul2021dual}. In our case, the problem boils down to finding
\begin{enumerate}
    \item two matrices $A,B\in\U{d}$ satisfying $\operatorname{diag}A=\operatorname{diag}B$ and $|A_{ij}|=|B_{ij}|$ for all $i\neq j$,
    \item phases $\omega_{ij}\in \mathbb{T}$ such that $A_{ji} = \omega_{ij}\overbar{A_{ij}}$ and
    $B_{ji} = \omega_{ij}\overbar{B_{ij}}$ for all $i<j$. 
\end{enumerate}
Once we have found $A,B$ satisfying these conditions, it is easy to construct $C$ by simply fixing $C_{ii}=A_{ii}=B_{ii}$ for all $i$ and choosing arbitrary $C_{ij}\in\mathbb{C}$ satsifying $|C_{ij}|^2 = 1 - |A_{ij}|^2 = 1- |B_{ij}|^2$ for all $i<j$. The remaining entries must then be $C_{ji} = -\omega_{ij}\overbar{C_{ij}}$ for all $i<j$. 

We now construct some examples of $A, B\in \M{d}$ that satisfy (1) and (2) above:
\begin{itemize}
    \item For every orthogonal projection $P\in\M{d}$, it is easy to see that $A=B=2P - \iden_d$ satisfy the conditions given in (1) and (2) with $\omega_{ij}=1$ for all $i<j$. Notice that $A=B$ are both Hermitian and unitary in this case. Constructing $C$ as above then gives rise to a family of dual unitary LDOI matrices for each orthogonal projection $P$.
    \item We can generalize the last example as follows. If, for an orthogonal projection $P\in\M{d}$, we choose $A=B=\omega^{1/2} (2P-\iden_d)$ for some $\omega\in \mathbb{T}$, then (1) and (2) will again be satisfied, this time with $\omega_{ij}=\omega$ for all $i<j$. Constructing $C$ in the same way as before will give us a new family of dual unitary LDOI matrices for each orthogonal projection $P$ and phase $\omega$.
    \item It is easy to see that if $A=B=\operatorname{diag}C$ for arbitrary $C$ with $C_{ij}\in\mathbb{T} \,\,\forall i,j$, then we get another family of dual unitary LDOI matrices, parametrized by all $C$ matrices of the stated form. Recall from Proposition~\ref{prop:dual} that this is precisely the class of dual unitary LDUI matrices. These matrices have been constructed in earlier works as well \cite{arul2021dual}, albeit without exploiting their inherent local diagonal unitary invariance property.
\end{itemize}

It would be interesting to see if one can completely characterize pairs of matrices satisfying conditions (1) and (2) above. We leave this as an open problem for the readers.

Our final result in this section illustrates that demanding dual and PT unitarity simultaneously turns out to be too strong a constraint to satisfy in the LDOI subspace. 

\begin{proposition}\label{prop:perfect_LDOI}
No perfect \emph{LDOI} unitary matrices exist.
\end{proposition}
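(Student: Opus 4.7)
The plan is to combine Proposition~\ref{prop:unitary-condition}, applied simultaneously to $X$, $X^R$, and $X^\Gamma$, in order to derive inconsistent phase relations on the off-diagonal entries of $A$, $B$, $C$. By Proposition~\ref{prop:LDOI_symm} we have $X^R = X^{(3)}_{(B,A,C)}$ and $X^\Gamma = X^{(3)}_{(A,C,B)}$, so three copies of Proposition~\ref{prop:unitary-condition} produce three compatible sets of constraints on the same triple $(A,B,C)$. The moduli conditions alone yield $|A_{ij}|^2 + |C_{ij}|^2 = |B_{ij}|^2 + |C_{ij}|^2 = |A_{ij}|^2 + |B_{ij}|^2 = 1$ for every $i\ne j$, hence $|A_{ij}|^2 = |B_{ij}|^2 = |C_{ij}|^2 = 1/2$. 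In particular no off-diagonal entry vanishes, which is what will make the subsequent phase extraction unambiguous.

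Next I would read off the phases. Unitarity of $X$ gives phases $\omega_{ij} \in \mathbb T$ with $A_{ji} = \omega_{ij}\,\overline{A_{ij}}$ and $C_{ji} = -\omega_{ij}\,\overline{C_{ij}}$. Unitarity of $X^R$ gives phases $\omega'_{ij}$ with $B_{ji} = \omega'_{ij}\,\overline{B_{ij}}$ and $C_{ji} = -\omega'_{ij}\,\overline{C_{ij}}$. Unitarity of $X^\Gamma$ gives phases $\omega''_{ij}$ with $A_{ji} = \omega''_{ij}\,\overline{A_{ij}}$ and $B_{ji} = -\omega''_{ij}\,\overline{B_{ij}}$. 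Since no off-diagonal entry vanishes, the $A$-relations force $\omega_{ij} = \omega''_{ij}$, the $C$-relations force $\omega_{ij} = \omega'_{ij}$, and the $B$-relations force $\omega'_{ij} = -\omega''_{ij}$. Chaining these equalities produces $\omega_{ij} = -\omega_{ij}$, i.e.\ $\omega_{ij} = 0$, contradicting $|\omega_{ij}| = 1$ (for any $d\ge 2$, so that a pair $i<j$ exists).

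The argument is therefore very short and contains no genuine obstacle; the only thing to track carefully is the sign bookkeeping in Proposition~\ref{prop:unitary-condition}, where the ``$C$'' entry of the defining triple is the one carrying the extra minus sign. The permutations $(A,B,C)\mapsto(B,A,C)$ and $(A,B,C)\mapsto(A,C,B)$ relocate this minus sign onto different pairs of matrices, and this precise asymmetry is the source of the clash. A block-structure argument via Proposition~\ref{prop:LDOI_block}, demanding that the three associated families of $2\times 2$ blocks built from pairs of entries of $\{A,B,C\}$ all be unitary, leads to the same contradiction but only after a small case analysis in low dimension; the phase approach above avoids this.
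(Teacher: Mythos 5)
Your proof is correct and follows essentially the same route as the paper: both extract the moduli conditions from the three unitarity requirements to force $|A_{ij}|=|B_{ij}|=|C_{ij}|=1/\sqrt{2}$ (hence non-vanishing off-diagonal entries), and then derive a sign clash among the phase relations. The only cosmetic difference is that you track three separate phases $\omega,\omega',\omega''$ directly from the three applications of Proposition~\ref{prop:unitary-condition}, whereas the paper first packages the $X,X^R$ pair and the $X,X^\Gamma$ pair into the dual and PT phase conditions with phases $\omega_{ij}$ and $\lambda_{ij}$; the contradiction is identical.
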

\begin{proof}
Requiring that the three LDOI matrices $$X^{(3)}_{(A,B,C)}, \quad \left(X^{(3)}_{(A,B,C)}\right)^R = X^{(3)}_{(B,A,C)},\quad\text{and}\quad\left( X^{(3)}_{(A,B,C)}\right)^\Gamma = X^{(3)}_{(A,C,B)}$$ be unitary leads us to the following set of conditions:
\begin{itemize}
    \item $A$, $B$, $C$ are unitary,
    \item for all $i<j$, $|A_{ij}|^2 = |B_{ij}|^2 = 1-|C_{ij}|^2$ and $|B_{ij}|^2 = |C_{ij}|^2 = 1-|A_{ij}|^2$,
    \item for all $i<j$, there exists a complex phase $\omega_{ij}\in \mathbb T$ such that
    $$A_{ji} = \omega_{ij} \overline{A_{ij}} \qquad B_{ji} = \omega_{ij} \overline{B_{ij}} \qquad C_{ji} = -\omega_{ij} \overline{C_{ij}},$$ 
    and a complex phase $\lambda_{ij}\in \mathbb T$ such that
    $$B_{ji} = \lambda_{ij} \overline{B_{ij}} \qquad C_{ji} = \lambda_{ij} \overline{C_{ij}} \qquad A_{ji} = -\lambda_{ij} \overline{A_{ij}}.$$ 
\end{itemize}
The second and third conditions above imply that for all $i \neq j$, $|A_{ij}| = |B_{ij}| = |C_{ij}| = 1/\sqrt 2$; in particular, the off-diagonal entries of $A$, $B$, $C$ are non-zero. From the third condition for $B$, we obtain $\omega_{ij} = \lambda_{ij}$, which contradicts the same condition for $A$ and $C$.
\end{proof}

\section{Operator Schmidt rank}\label{sec:non-locality}

In this section, we introduce an important discrete measure of non-locality of bipartite matrices, namely the \emph{operator schmidt rank}, and study it for LDOI unitary matrices. Let us begin with the \emph{operator Schmidt decomposition} of $X\in\M{d}\otimes \M{d}$ \cite{Nielson2003opschmidt}:
\begin{equation}
    X = \sum_{i=1}^\Omega \lambda_i A_i \otimes B_i, \quad\text{where}
\end{equation}
\begin{itemize}
    \item $\Omega(X)=\operatorname{rank}X^R$ is called the \emph{operator Schmidt rank} of $X$.
    \item $\lambda_i>0$ are the (non-zero) singular values of $X^R$ (known as the \emph{Schmidt coefficients} of $X$), 
    \item and $\{A_i\}_{i=1}^\Omega, \{B_i\}_{i=1}^\Omega$ in $\M{d}$ are orthonormal\footnote{With respect to the Hilbert Schmidt inner product defined as $\langle A,B \rangle := \operatorname{Tr}(A^\dagger B)$.}.
\end{itemize}
The number $\Omega(X)$ is the minimum number for which $X$ admits a decomposition of the above form. Clearly, $X$ is of the product form if and only if $\Omega(X)=1$. Higher values of $\Omega(X)$ indicate higher levels of non-locality of $X$, with the maximum being attained for $\Omega(X)=d^2$. Therefore, it is clear that $\Omega(X)$ is a discrete measure of non-locality of $X$. For any non-empty subset $S$ of $\M{d}\otimes \M{d}$, we define the set of all allowed operator Schmidt ranks for matrices in $S$ as
\begin{equation}
    \Omega(S) := \{ \Omega(X)\in \mathbb{N} \, \vert \, X\in S \}\subseteq \{1,2,\ldots ,d^2 \}.
\end{equation}

We have the following result for computing the operator Schmidt ranks of matrices in $\LDOI_d$.

\begin{proposition}\label{prop:LDOI_opschm}
The operator Schmidt rank of $X^{(3)}_{(A,B,C)}\in \LDOI_d$ admits the following expression:
\begin{equation*}
    \Omega\left(X^{(3)}_{(A,B,C)}\right) = \operatorname{rank}A + \sum_{i<j} \operatorname{rank}\left(\begin{array}{c c}
   B_{ij} & C_{ij} \\
   C_{ji} & B_{ji}
\end{array} \right).
\end{equation*}
\end{proposition}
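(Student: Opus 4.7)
The plan is to combine the definition of the operator Schmidt rank with the two structural results on $\LDOI_d$ already recorded in the excerpt, namely Proposition \ref{prop:LDOI_symm} (the behaviour of $X^{(3)}_{(A,B,C)}$ under realignment) and Proposition \ref{prop:LDOI_block} (the block-diagonal form of an $\LDOI$ matrix). The whole point is that the realignment acts in an extremely transparent way on the triple $(A,B,C)$, and the block decomposition turns a rank computation on $\M{d}\otimes\M{d}$ into a rank computation on a direct sum of much smaller blocks.

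Concretely, I would proceed as follows. First, unfold the definition $\Omega(X) = \operatorname{rank}(X^R)$, so that it suffices to compute the rank of $\left(X^{(3)}_{(A,B,C)}\right)^R$. Second, apply Proposition \ref{prop:LDOI_symm} to rewrite this realignment as an $\LDOI$ matrix with permuted data, namely
\begin{equation*}
\left(X^{(3)}_{(A,B,C)}\right)^R \;=\; X^{(3)}_{(B,A,C)}.
\end{equation*}
Third, invoke the block decomposition of Proposition \ref{prop:LDOI_block}, now applied to the triple $(B,A,C)$, to obtain
\begin{equation*}
X^{(3)}_{(B,A,C)} \;=\; A \;\oplus\; \bigoplus_{i<j} \begin{pmatrix} B_{ij} & C_{ij} \\ C_{ji} & B_{ji} \end{pmatrix}.
\end{equation*}
Finally, the rank of a direct sum is the sum of the ranks of the summands, which yields exactly the stated formula.

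There is no real obstacle here: the content of the statement has effectively been pre-packaged by the previous two propositions, and the proof amounts to stringing them together. The only sanity check worth performing is to confirm that the triple $(B,A,C)$ still lies in $\MLDOI{d}$, i.e.~that $A$, $B$, $C$ have a common diagonal so that Proposition \ref{prop:LDOI_block} may be legitimately applied to it; this is immediate from the standing assumption $\operatorname{diag} A = \operatorname{diag} B = \operatorname{diag} C$.
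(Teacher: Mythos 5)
Your proposal is correct and follows exactly the same route as the paper: unfold $\Omega(X)=\operatorname{rank}(X^R)$, use Proposition~\ref{prop:LDOI_symm} to get $X^{(3)}_{(B,A,C)}$, then apply the block decomposition of Proposition~\ref{prop:LDOI_block} and sum the ranks of the blocks. The paper compresses all of this into a single chain of equalities, so your version is just a slightly more explicit write-up of the identical argument.
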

\begin{proof}
The proof follows by noting that (see Propositions~\ref{prop:LDOI_symm} and \ref{prop:LDOI_block})
\begin{equation}
    \Omega\left(X^{(3)}_{(A,B,C)}\right) = \operatorname{rank}\left(X^{(3)}_{(A,B,C)}\right)^R = \operatorname{rank}X^{(3)}_{(B,A,C)} = \operatorname{rank}A + \sum_{i<j} \operatorname{rank}\left(\begin{array}{c c}
   B_{ij} & C_{ij} \\
   C_{ji} & B_{ji}
\end{array} \right).
\end{equation}
\end{proof}

In the bipartite unitary group $\UU{d}$, it is known that different operator Schmidt ranks split the group into a finite number (at most $d^2$) of distinct equivalence classes, where two unitaries $X,X'\in \UU{d}$ are defined to be SLOCC-\emph{equivalent} if both of them can be probabilistically simulated from the other via (stochastic) local operations and classical communication (SLOCC). In other words, $X,X'$ are SLOCC-equivalent if and only if $\Omega(X)=\Omega(X')$ \cite{Dur2002schmidt}. It was shown in the same paper that only three such equivalence classes exist when $d=2$, since for $X\in\UU{2}$, the operator Schmidt rank $\Omega(X)=3$ is forbidden. Whether similar restrictions also arise in higher dimensions was an open question until the authors in \cite{Muller2018schmidt} proved that it was not the case.  

\begin{theorem} \cite{Muller2018schmidt} \label{theorem:muller}
For $X\in \UU{d}$, all operator Schmidt ranks $\Omega(X)\in \{1,2,\ldots ,d^2\}$ are allowed whenever $d\geq 3$. In other words,
$$ \forall d\geq 3: \quad \Omega\big(\UU{d}\big) = \{1,2,\ldots ,d^2 \}.$$
\end{theorem}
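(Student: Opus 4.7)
The plan is to prove Theorem~\ref{theorem:muller} by constructing, for every $r \in \{1, \ldots, d^2\}$ and every $d \geq 3$, a unitary LDOI matrix $X^{(3)}_{(A,B,C)}$ with operator Schmidt rank equal to $r$. The key tool is the additive formula from Proposition~\ref{prop:LDOI_opschm},
\[
\Omega\bigl(X^{(3)}_{(A,B,C)}\bigr) = \operatorname{rank}(A) + \sum_{i<j} \operatorname{rank}\!\begin{bmatrix} B_{ij} & C_{ij} \\ C_{ji} & B_{ji} \end{bmatrix},
\]
which allows independent control of $\operatorname{rank}(A) \in \{1, \ldots, d\}$ and of the $\binom{d}{2}$ off-diagonal block ranks in $\{0, 1, 2\}$. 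Since $1 + 0 = 1$ and $d + 2\binom{d}{2} = d^2$, every value $r \in \{1, \ldots, d^2\}$ is arithmetically reachable; the substance of the proof is to realize the prescribed ranks while respecting the LDOI unitarity constraints of Proposition~\ref{prop:unitary-condition}.

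Given $r$, I would first decompose $r = \rho + 2k + \ell$ with $\rho \in \{1, \ldots, d\}$, $\ell \in \{0, 1\}$, $k \geq 0$, and $k + \ell \leq \binom{d}{2}$, the parity of $\ell$ being forced by that of $r-\rho$; a direct case analysis shows such a triple exists for every $r$ in the target range when $d \geq 3$. I would then take $B = R_\theta \oplus I_{d-2}$ for some $\theta \in (0, \pi/2)$, where $R_\theta$ is a two-dimensional rotation; this choice requires $d \geq 3$ and supplies exactly one off-diagonal pair (namely $(1,2)$) that can carry a rank-$1$ $(B,C)$ block. On that pair, setting $C_{12} = \pm\sin\theta$ yields determinant $B_{12}B_{21} - C_{12}C_{21} = 0$ (hence rank $1$), while simultaneously forcing $|A_{12}| = \cos\theta$ so that the $(A,C)$ block is a valid $2\times 2$ orthogonal; choosing instead $C_{12} = 0$ gives rank $0$. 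For the remaining pairs, $B_{ij} = B_{ji} = 0$, so the $(B,C)$ block has rank $0$ when $C_{ij} = 0$ (and then $|A_{ij}| = 1$) and rank $2$ when $C_{ij} \neq 0$; we switch on exactly $k$ such rank-$2$ blocks. Finally, the free off-diagonal entries of $A$, together with the continuous parameters $|C_{ij}| \in (0,1)$ on the rank-$2$ pairs, are tuned so that $\operatorname{rank}(A) = \rho$, e.g.~by deforming $A$ toward an appropriate rank-$\rho$ outer product compatible with $\operatorname{diag}(A) = \operatorname{diag}(B)$. Plugging into Proposition~\ref{prop:LDOI_opschm} then gives $\Omega(X) = \rho + 2k + \ell = r$.

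The most delicate step is this final rank-matching for $A$: once $B$ and the $(B,C)$ block pattern are fixed, the relations $|A_{ij}|^2 + |C_{ij}|^2 = 1$ and $A_{ji} = \omega_{ij}\overline{A_{ij}}$ constrain the admissible off-diagonal entries, and one must verify that every $\rho \in \{1, \ldots, d\}$ compatible with $\operatorname{diag}(B)$ is reachable. The mechanism making this work is that on each rank-$2$ $(B,C)$ pair one keeps a continuous degree of freedom $|C_{ij}| \in (0,1)$, hence $|A_{ij}| \in (0,1)$, so a sufficient supply of rank-$2$ blocks permits deforming $A$ continuously across all attainable ranks. This flexibility collapses when $d = 2$: the single off-diagonal pair cannot simultaneously support a rank-$1$ $(B,C)$ block \emph{and} leave room for the rank of $A$ to yield $\Omega = 3$, matching the known obstruction $\Omega(\UU{2}) = \{1, 2, 4\}$ and explaining why the argument--and the theorem--really begin at $d = 3$.
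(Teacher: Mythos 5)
Your overall strategy is the same as the paper's: rather than proving the cited result directly, you prove the stronger statement that every rank in $\{1,\ldots,d^2\}$ is attained inside the unitary LDOI class, using the additive formula of Proposition~\ref{prop:LDOI_opschm} together with the unitarity characterization of Proposition~\ref{prop:unitary-condition} (the paper does exactly this, even for \emph{real orthogonal} LDOI matrices, in Theorem~\ref{theorem:opschm_LDOI}). However, your construction has one concrete error and one genuine gap. The error: with $B=R_\theta\oplus I_{d-2}$ and $\theta\in(0,\pi/2)$, setting $C_{12}=0$ does \emph{not} give a rank-$0$ block on the pair $(1,2)$; the block is $\operatorname{diag}(-\sin\theta,\sin\theta)$, which has rank $2$. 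To get $\ell=0$ you must drop the rotation entirely and take $B$ diagonal. This is easily repaired, but as written the bookkeeping $r=\rho+2k+\ell$ is not realized by the stated matrices.

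The gap is the step you yourself flag as delicate: making $\operatorname{rank}(A)=\rho$ compatible with the block pattern. The appeal to ``deforming $A$ continuously across all attainable ranks'' is not an argument --- rank is not continuous, and more importantly the unitarity constraints genuinely couple $\operatorname{rank}(A)$ to the choice of blocks. For instance, on every pair with a rank-$0$ $(B,C)$ block you are forced to have $|A_{ij}|=|A_{ji}|=1$, and a rank-one $A$ with unimodular diagonal and $|A_{ij}|=|A_{ji}|$ necessarily has \emph{all} entries unimodular (if $A=uv^\top$ then $|A_{ij}|^2=|A_{ij}||A_{ji}|=|u_iv_i||u_jv_j|=1$), which forces every $C_{ij}=0$ and hence $\Omega=1$. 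So not every arithmetic decomposition $r=\rho+2k+\ell$ is realizable, and you must exhibit, for each $r$, at least one decomposition together with an explicit $A$ of the prescribed rank satisfying $\operatorname{diag}A=\operatorname{diag}B$, the modulus constraints $|A_{ij}|^2+|C_{ij}|^2=1$, and $|A_{ij}|=|A_{ji}|$. This verification is where essentially all of the work lies: the paper splits the range into three regimes, uses sign matrices obtained by duplicating columns of $\mathbb J_d-2\iden_d$ for ranks up to $d$, invokes nontrivial results on the possible number of zero entries of $d\times d$ orthogonal matrices for the upper and middle regimes, and still needs ad hoc explicit matrices for $d=3,4,5$. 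Your proposal identifies the right skeleton but leaves precisely this load-bearing step unproved.
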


Within the context of the present paper, it is natural to ask whether the above theorem holds for bipartite unitary matrices in $\LDOI_d$. In what follows, we prove that this is indeed the case. Moreover, we go a step further and show that the above result holds even for (real) \emph{orthogonal} matrices in $\LDOI_d$. The validity of Theorem~\ref{theorem:muller} for real orthogonal matrices was left as an open problem in \cite{Muller2018schmidt}. In summary, the LDOI constraint allows us to 
\begin{itemize}
    \item provide a simple and more uniform proof of Theorem~\ref{theorem:muller},
    \item strengthen Theorem~\ref{theorem:muller} to work for real orthogonal matrices as well.
\end{itemize}
We should point out that any bipartite unitary matrix $X\in\UU{2}$ is \emph{locally equivalent} to a matrix in $\UU{2}\cap \LDOI_2$ \cite{Kraus2001qubitunitary,Zhang2003qubitunitary}, where we say that $X, X'\in \UU{d}$ are locally equivalent if there exist local unitary matrices $U_i,V_i\in \U{d}$ such that 
\begin{equation}
    X' = (U_1\otimes U_2)X (V_1\otimes V_2).
\end{equation}
Since locally equivalent unitary matrices have equal operator Schmidt ranks (see Remark~\ref{remark:localequiv}), it is clear that the set of allowed Schmidt ranks for the full unitary group $\UU{2}$ is the same as that for the subgroup $\UU{2}\cap \LDOI_2$. For all higher dimensions, we have the following theorem.

\begin{theorem}\label{theorem:opschm_LDOI}
For $X\in \OO{d}\cap \LDOI_d$, all operator Schmidt ranks $\Omega(X)\in \{1,2,\ldots ,d^2\}$ are allowed whenever $d\geq 3$. In other words,
$$ \forall d\geq 3: \quad \Omega\Big(\OO{d}\cap \LDOI_d \Big) = \{1,2,\ldots ,d^2 \}.$$
\end{theorem}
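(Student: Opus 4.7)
By Proposition~\ref{prop:LDOI_opschm},
$$\Omega\bigl(X^{(3)}_{(A,B,C)}\bigr) \;=\; \operatorname{rank}(A) + \sum_{i<j} \operatorname{rank}\!\begin{pmatrix} B_{ij} & C_{ij} \\ C_{ji} & B_{ji} \end{pmatrix},$$
with the first summand in $\{0,\ldots,d\}$ and each of the $\binom{d}{2}$ block ranks in $\{0,1,2\}$, so the total lies a priori in $\{0,\ldots,d^2\}$. Writing an arbitrary target Schmidt rank as $r=a+b$ with $a=\operatorname{rank}(A)$ and $b=\sum_{i<j}\operatorname{rank}(\cdot)$, my plan is to realize every $r\in\{1,\ldots,d^2\}$ by an orthogonal $X^{(3)}_{(A,B,C)}$ subject to the real analogue of Proposition~\ref{prop:unitary-condition}: $B\in\O{d}$, $\operatorname{diag}(A)=\operatorname{diag}(B)=\operatorname{diag}(C)$, and $\begin{pmatrix}A_{ij}&C_{ij}\\C_{ji}&A_{ji}\end{pmatrix}\in\O{2}$ for every $i<j$.

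The easy range $r\in\{1,\ldots,d\}$ I would dispose of by taking $B=C=I_d$. The orthogonality constraints then force $\operatorname{diag}(A)=\mathbf{1}$, $C_{ij}=0$ and $A_{ij}\in\{-1,+1\}$ for $i\neq j$, so every block term above vanishes and $\Omega(X)=\operatorname{rank}(A)$. It remains to exhibit a $\pm 1$-valued $A\in\Mreal{d}$ with all-ones diagonal of every prescribed rank $r$; for this the matrix whose rows are $s_1=\mathbf{1}$, $s_k=\mathbf{1}-2e_{k-1}$ for $2\leq k\leq r$, and $s_k=\mathbf{1}$ for $k>r$ works, since $s_1,\ldots,s_r$ are linearly independent while the remaining rows are repetitions of $s_1$.

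For $r\in\{d+1,\ldots,d^2\}$ the block sum $b$ must contribute, so $B$ is forced to carry nontrivial off-diagonals. My plan here is to let $B$ be a direct sum of an identity block with small orthogonal pieces (rotations $R_\theta$ or signed permutations such as $\bigl(\begin{smallmatrix}0&1\\1&0\end{smallmatrix}\bigr)$), and to use the remaining angular/sign freedom in each $O_{ij}=\begin{pmatrix}A_{ij}&C_{ij}\\C_{ji}&A_{ji}\end{pmatrix}\in\O{2}$ to simultaneously tune $\operatorname{rank}(A)$ and the individual block ranks. Since each block rank can be set to $0$, $1$ or $2$ by appropriately controlling the zero pattern of $\{B_{ij},B_{ji},C_{ij},C_{ji}\}$, every $b\in\{0,\ldots,d(d-1)\}$ is a priori accessible, after which $a=r-b$ is reached via the choice of off-diagonal angles of $A$.

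The main obstacle is the coupling between $A$ and $B$ imposed by $\operatorname{diag}(A)=\operatorname{diag}(B)$. In particular, an odd block sum $b$ cannot be attained with a diagonal $B$: each block then degenerates to $\begin{pmatrix}0&C_{ij}\\C_{ji}&0\end{pmatrix}$, whose rank is $0$ or $2$ only since $C_{ji}=\pm C_{ij}$. Producing odd contributions therefore requires $B$ with an asymmetric sparsity pattern (e.g.\ a cyclic $3$-permutation on three coordinates), which partially zeros out $\operatorname{diag}(B)$ and thus restricts the attainable values of $\operatorname{rank}(A)$. This is where $d\geq 3$ becomes essential: with at least three coordinates one can fit such an asymmetric orthogonal block on a subset of indices while keeping enough remaining coordinates to freely adjust $\operatorname{rank}(A)$, and a combinatorial case analysis covering both parities of $b$ then yields every $r\in\{1,\ldots,d^2\}$.
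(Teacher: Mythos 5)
Your reduction to Proposition~\ref{prop:LDOI_opschm} and your treatment of the range $r\in\{1,\ldots,d\}$ are correct and essentially identical to the paper's Part 1 (the paper duplicates columns of $\mathbb{J}_d-2\iden_d$ instead of your $s_k=\mathbf{1}-2e_{k-1}$ construction, but both give $\pm1$ matrices with unit diagonal of every rank, and with $B=C=\iden_d$ all the unitarity blocks degenerate correctly). You have also correctly identified the two real obstacles: the coupling $\operatorname{diag}(A)=\operatorname{diag}(B)=\operatorname{diag}(C)$, and the parity restriction when $B$ is diagonal (indeed $C_{ji}=-\theta_{ij}C_{ij}$ forces $|C_{ij}|=|C_{ji}|$, so each block rank is $0$ or $2$).

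However, for $r\in\{d+1,\ldots,d^2\}$ your proposal is a plan, not a proof, and the missing part is precisely where the difficulty of the theorem lies. The assertion that ``every $b\in\{0,\ldots,d(d-1)\}$ is a priori accessible'' by controlling the zero pattern of $B$ is exactly what must be established: the zero patterns of $d\times d$ orthogonal matrices are heavily constrained, and the paper has to invoke two non-trivial combinatorial results to control them --- one guaranteeing orthogonal matrices with all off-diagonal entries nonzero and a prescribed number $m$ of diagonal zeros (which moreover \emph{fails} for $(d,m)=(3,2)$ and $(3,3)$, forcing separate explicit constructions), and one characterizing the possible total numbers of zeros in an orthogonal matrix (which omits the values $d^2-d-1$ and $d^2-d-3$, so the ranks $d+1,d+2,d+3$ need bespoke examples). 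The paper then supplies roughly a page of explicit matrix triples for $d=3,4,5$ to cover these exceptional ranks, using the trick of toggling $\operatorname{rank}(A)$ between $d-1$ and $d$ while keeping the (necessarily even) block contribution fixed --- note that this sidesteps the odd-parity issue entirely, without ever needing your asymmetric cyclic-permutation blocks. Your final sentence, ``a combinatorial case analysis covering both parities of $b$ then yields every $r$,'' compresses all of this into an unverified claim; as written, the proof of the theorem for ranks above $d$ is not there.
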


\begin{proof}
Let us start the proof by recalling that for $X^{(3)}_{(A,B,C)}\in \LDOI_d$,
\begin{equation}
    X^{(3)}_{(A,B,C)} \in \OO{d} \iff B\in \O{d} \,\text{ and } \begin{bmatrix} A_{ij} & C_{ij} \\ C_{ji} & A_{ji} \end{bmatrix} \in \O{2} \text{ for all } i<j.
\end{equation}
For every LDOI matrix constructed in this proof, the readers should check that the above conditions on $A,B,C$ are satisfied. Moreover, we know from Proposition~\ref{prop:LDOI_opschm} that
\begin{equation}\label{eq:LDOI_opschm}
    \Omega\left(X^{(3)}_{(A,B,C)}\right) =  \operatorname{rank}A + \sum_{i<j} \operatorname{rank}\left(\begin{array}{c c}
   B_{ij} & C_{ij} \\
   C_{ji} & B_{ji}
\end{array} \right).
\end{equation}
The following proof is split into three parts. \\

\noindent \textbf{Part 1.} Ranks $\Omega(X)\in \{1,2,\ldots ,d\}$ are possible. \\
Choose $A\in\Mreal{d}$ to be a sign matrix ($A_{ij}=\pm 1$) and $B=C=\operatorname{diag}A$, so that 
\begin{equation}
\Omega(X^{(3)}_{(A,B,C)})=\operatorname{rank}A.    
\end{equation}
Note that $\operatorname{rank}(\mathbb{J}_d - 2\iden_d)=d$, where $\mathbb{J}_d\in\M{d}$ is the all ones matrix and $\iden_d\in \M{d}$ is the identity matrix. Now, by duplicating as many columns of $\mathbb{J}_d - 2\iden_d$ as needed, $A$ can be constructed to have any given rank in $\{1,2,\ldots ,d\}$.  \\

\noindent \textbf{Part 2.} Ranks $\Omega(X)\in\{d^2-d,d^2-d+1,\ldots , d^2\}$ are possible. \\
Let $B\in \O{d}$, $A=\operatorname{diag}B$, and $C\in\Mreal{d}$ be such that $\operatorname{diag}C=\operatorname{diag}B$ and $C_{ij}=\pm 1 \,\, \forall i\neq j$. Moreover, let $B$ be such that all its off-diagonal entries are non-zero and there are exactly $m(B)$ zero entries present on the diagonal. Then, Eq.~\eqref{eq:LDOI_opschm} reads
\begin{equation}
    \Omega\left(X^{(3)}_{(A,B.C)}\right) = [d-m(B)] + (d^2-d) = d^2-m(B).
\end{equation}
Except for $(d,m)= (3,2)$ and $(d,m)=(3,3)$, a $B\in \O{d}$ with $m(B)=m$ always exists for all $d\geq 3$ and $m\in \{0,1,\ldots d\}$ \cite[Theorem 5.8]{Bailey2019orthozero}. Hence, $B$ can be appropriately chosen so that all Schmidt ranks in $\{d^2-d,d^2-d+1,\ldots , d^2\}$ are attained if $d\geq 4$. When $d=3$, the same argument shows that the ranks in $\{8,9\}$ are possible. For the remaining ranks, we can easily construct the required matrix triples explicitly:
\begin{align*}
   A = \left(\begin{array}{c r r}
   1 & 0 & 0 \\
   0 & 1 & 1 \\
   0 & \phantom{\pm}1 & \pm 1
\end{array} \right), \,\, B = \operatorname{diag}A, \,\, C = \left(\begin{array}{c r r} 
   1 & 1 & 1 \\
   1 & 1 & 0 \\
   1 & \phantom{\pm}0 & \pm 1
\end{array} \right) &\implies \Omega\left(X^{(3)}_{(A,B,C)}\right) = \begin{cases}
6, &\text{if } A_{33} = 1 \\
7, &\text{if } A_{33} = -1
\end{cases}
\end{align*}
\vspace{0.1cm}

\noindent \textbf{Part 3.} Ranks $\Omega(X)\in\{d+1,d+2,\ldots ,d^2-d-1\}$ are possible. \\
We first deal with dimensions $d\geq 5$. Let $B\in \O{d}$, $C=\operatorname{diag}B$, and $A\in\Mreal{d}$ be such that $\operatorname{diag}A=\operatorname{diag}B$ and $A_{ij}=\pm 1 \,\, \forall i\neq j$. Then, Eq.~\eqref{eq:LDOI_opschm} reads
\begin{equation}
    \Omega\left(X^{(3)}_{(A,B.C)}\right) = \operatorname{rank}A + [d^2 - d - n(B)],
\end{equation}
where $n(B)$ denotes the number of zero entries present in the off-diagonal part of $B$. By suitably permuting the rows/columns of $B$, we can arrange for all its zero entries to be present only in the off-diagonal part. When $d\geq 5$, the total number of zero entries in a $d\times d$ orthogonal matrix can be anything within the set (\cite[Theorem 2.5]{Song2020orthozero})
\begin{align*}
\{0,1,2,\ldots ,d^2-d-4,d^2-d-2,d^2-d\}.
\end{align*}
Hence, by choosing $B\in\O{d}$ to have the appropriate number $n(B)$ of off-diagonal zeros, all Schmidt ranks in $\{d+4,d+5,\ldots ,d^2-d-1\}$ can be attained, irrespective of what $\operatorname{rank}A$ is. For the remaining ranks $\{d+1,d+2,d+3 \}$, let us construct the desired matrices explicitly for the $d=5$ case. We will now choose $B=\operatorname{diag}A=\operatorname{diag}C$, so that according to Eq.~\eqref{eq:LDOI_opschm},
\begin{equation}\label{eq:Bdiag}
    \Omega\left(X^{(3)}_{(A,B,C)}\right) = \operatorname{rank}A + n'(C),
\end{equation}
where $n'(C)=d^2-d-n(C)$ is the total number of non-zero off-diagonal entries of $C$. Then, 
\begin{align*}
    A = \left(\begin{array}{r r r r r}
   -1 & 0 & 1 & 1 & 1 \\
   0 & -1 & 1 & 1 & 1 \\
   1 & 1 & -1 & 1 & 1 \\
   1 & 1 & 1 & -1 & -1 \\
   1 & 1 & 1 & 1 & 1
\end{array} \right), \,\,  C = \left(\begin{array}{r r r r r}
   -1 & 1 & 0 & 0 & 0 \\
   1 & -1 & 0 & 0 & 0 \\
   0 & 0 & -1 & 0 & 0 \\
   0 & 0 & 0 & -1 & 0 \\
   0 & 0 & 0 & 0 & \phantom{-}1
\end{array} \right) &\implies \Omega\left(X^{(3)}_{(A,B.C)}\right) = 6. \\
A = \left(\begin{array}{r r r r r}
   -1 & 0 & 1 & 1 & 1 \\
   0 & -1 & 1 & 1 & 1 \\
   1 & 1 & -1 & 1 & 1 \\
   1 & 1 & 1 & -1 & 1 \\
   1 & 1 & 1 & 1 & -1
\end{array} \right), \,\,  C = \left(\begin{array}{r r r r r}
   -1 & 1 & 0 & 0 & 0 \\
   1 & -1 & 0 & 0 & 0 \\
   0 & 0 & -1 & 0 & 0 \\
   0 & 0 & 0 & -1 & 0 \\
   0 & 0 & 0 & 0 & -1
\end{array} \right) &\implies \Omega\left(X^{(3)}_{(A,B.C)}\right) = 7. \\
A = \left(\begin{array}{r r r r r}
   -1 & 0 & 0 & 1 & 1 \\
   0 & -1 & 1 & 1 & 1 \\
   0 & 1 & -1 & 1 & 1 \\
   1 & 1 & 1 & -1 & -1 \\
   1 & 1 & 1 & 1 & 1
\end{array} \right), \,\,  C = \left(\begin{array}{r r r r r}
   -1 & 1 & 1 & 0 & 0 \\
   1 & -1 & 0 & 0 & 0 \\
   1 & 0 & -1 & 0 & 0 \\
   0 & 0 & 0 & -1 & 0 \\
   0 & 0 & 0 & 0 & \phantom{-}1
\end{array} \right) &\implies \Omega\left(X^{(3)}_{(A,B.C)}\right) = 8.
\end{align*}
The idea above is to keep $n'(C)\in\{2,4\}$ and construct $A$ appropriately to either have $\operatorname{rank}A=d-1$ or $\operatorname{rank}A=d$. These constructions can be easily generalized to higher dimensions $d>5$. Moreover, the same idea of keeping $B$ diagonal while changing $n'(C)$ and $\operatorname{rank}A$ appropriately can be applied to obtain all the required Schimdt ranks in dimensions $d=3,4$ as well. The details can be found in the appendix.

\end{proof}

\begin{remark}
The product unitary matrices in $\LDOI_d$ are of the form $Y\otimes Z$, where $Y,Z\in \U{d}$ are diagonal unitary matrices. When $d=2$, $Y,Z$ can also be of the form (for some $\omega_{ij}\in \mathbb{T}$):
\begin{equation*}
    \left(\begin{array}{c c}
   0 & \omega_{12} \\
   \omega_{21} & 0
\end{array} \right).
\end{equation*}
This result can be derived by analyzing the condition for the operator Schmidt rank of an \emph{LDOI} matrix to be one, see Proposition~\ref{prop:LDOI_opschm}. See also \cite[Proposition 3.1 and Remark 3.2]{singh2021diagonal}.
\end{remark}

In a nutshell, Theorem~\ref{theorem:opschm_LDOI} informs us that the intersection of the full bipartite (real) orthogonal group $\OO{d}\subseteq \UU{d}$ with $\LDOI_d$ is still rich enough to accommodate all levels of discrete non-locality as measured by the operator Schmidt rank. In terms of probabilistic interconvertibility of bipartite unitary operators via SLOCC, the following consequence of Theorem~\ref{theorem:opschm_LDOI} is evident.

\begin{corollary}
Any unitary operator in $\UU{d}$ is \emph{SLOCC}-equivalent to a real orthogonal operator in $\LDOI_d$.
\end{corollary}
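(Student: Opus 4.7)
The plan is to reduce the statement to Theorem~\ref{theorem:opschm_LDOI} by invoking the dictionary between SLOCC-equivalence and the operator Schmidt rank recalled just before Theorem~\ref{theorem:muller}: two unitaries $X, X' \in \UU{d}$ are SLOCC-equivalent if and only if $\Omega(X) = \Omega(X')$. Consequently, proving the corollary amounts to verifying that every value of $\Omega$ attainable by some $X \in \UU{d}$ is also attained by some $Y \in \OO{d} \cap \LDOI_d$.

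For $d \geq 3$, this is immediate from Theorem~\ref{theorem:opschm_LDOI}, which shows that the full set $\{1, 2, \ldots, d^2\}$ is already realized within $\OO{d} \cap \LDOI_d$. Given an arbitrary $X \in \UU{d}$ with $\Omega(X) = r$, one simply picks a real orthogonal LDOI matrix $Y$ with $\Omega(Y) = r$ furnished by that theorem, and concludes by the rank characterization that $X$ and $Y$ are SLOCC-equivalent.

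The only case requiring explicit attention is $d = 2$, since Theorem~\ref{theorem:opschm_LDOI} excludes it and, moreover, $\Omega = 3$ is known to be forbidden in $\UU{2}$. The three achievable ranks $\{1, 2, 4\}$ are realized within $\OO{2} \cap \LDOI_2$ by the identity $\iden_2 \otimes \iden_2$, the diagonal sign matrix $\operatorname{diag}(1, 1, -1, 1)$ (a real orthogonal LDUI matrix whose Schmidt rank equals $2$ by Proposition~\ref{prop:LDOI_opschm}, with block data $A = \bigl(\begin{smallmatrix} 1 & 1 \\ -1 & 1 \end{smallmatrix}\bigr)$ and $B = C = \iden_2$), and the swap gate $S$, respectively. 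I do not foresee any substantive obstacle: the corollary is essentially a transparent consequence of the rank characterization of SLOCC-equivalence together with the exhaustiveness result in Theorem~\ref{theorem:opschm_LDOI}; the main bookkeeping is to verify that the exhibited $d = 2$ examples indeed lie in $\OO{2} \cap \LDOI_2$ and have the claimed ranks, which is a direct computation from Propositions~\ref{prop:unitary-condition} and~\ref{prop:LDOI_opschm}.
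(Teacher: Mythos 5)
Your proposal is correct and follows exactly the route the paper intends: the corollary is stated as an immediate consequence of Theorem~\ref{theorem:opschm_LDOI} combined with the fact that SLOCC-equivalence of bipartite unitaries is equivalent to equality of operator Schmidt ranks. Your explicit treatment of $d=2$ with the real orthogonal LDOI representatives for ranks $\{1,2,4\}$ is a welcome addition, since the paper's preceding remark only reduces $\UU{2}$ to $\UU{2}\cap\LDOI_2$ rather than to $\OO{2}\cap\LDOI_2$, and your examples check out against Propositions~\ref{prop:unitary-condition} and~\ref{prop:LDOI_opschm}.
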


\section{Continuous measures of non-locality}\label{sec:non-locality2}
In the last section, we focused on the analysis of a \emph{discrete} measure of non-locality of LDOI unitary matrices, namely the operator Schmidt rank. In this section, we develop and study several \emph{continuous} measures of non-locality of LDOI unitary matrices. As before, let us start with the operator Schmidt decomposition of $X\in \UU{d}$:
\begin{equation}\label{eq:Xschmidt}
    X = \sum_{i=1}^\Omega \lambda_i A_i \otimes B_i,
\end{equation}
where, $\Omega(X)=\operatorname{rank}X^R$ is the operator Schmidt rank of $X$, $\lambda_i>0$ are the (non-zero) Schmidt coefficients of $X$, and the matrix sets $\{A_i\}_{i=1}^\Omega, \{B_i\}_{i=1}^\Omega$ in $\M{d}$ are orthonormal. Then,
\begin{equation}
    \operatorname{Tr}(XX^\dagger) = d^2 \implies \frac{1}{d^2}\sum_{i=1}^\Omega \lambda_i^2= 1.
\end{equation}
Thus, we can use different kinds of entropies of the discrete probability vector $\{\lambda_i^2/d^2 \}_{i=1}^\Omega$ to characterize the non-locality of $X$. In particular, we will be interested in the $2$-\emph{Tsallis entropy}:
\begin{equation}\label{eq:opent}
    E(X) = 1-\frac{1}{d^4}\sum_{i=1}^\Omega \lambda_i^4 = 1 - \frac{1}{d^4} \operatorname{Tr}[(X^RX^{R\dagger})^2] ,
\end{equation}
which we simply refer to as the \emph{operator entanglement} of $X$. It can be easily verified that
\begin{itemize}
    \item $E(X)$ takes values within the interval $[0,1-1/d^2]$ for all $X\in\UU{d}$,
    \item $E(X)=0$ if and only if $X$ is a product of local unitary matrices,
    \item $E(X)=1-1/d^2$ if and only if $X$ is a dual unitary operator.
\end{itemize}


\begin{remark}\label{remark:localequiv}
The Schmidt coefficients stay invariant under local unitary operations, i.e. \emph{locally equivalent} unitary operators $X$ and $X'=(U_1\otimes U_2)X(V_1\otimes V_2)$ (for $U_i,V_i\in\U{d}$) have the same Schmidt coefficients, which implies that $E(X)=E(X')$ as well. This is because if $X$ has a Schmidt decomposition as given in Eq.~\eqref{eq:Xschmidt}, then $X'$ admits the following Schmidt decomposition:
\begin{equation}
    X' = \sum_{i=1}^\Omega \lambda_i A'_i \otimes B'_i,
\end{equation}
where the matrix sets $\{A'_i = U_1 A_i U_3\}_{i=1}^\Omega$ and $\{B'_i=U_2 B_i U_4\}_{i=1}^\Omega$ are again orthonormal.
\end{remark} 

It turns out that the average entanglement generated by a unitary operator $X\in\UU{d}$ when acting upon local product vectors admits a neat expression in terms of the operator entanglement of $X$ and $XS$, where $S\in \UU{d}$ is the swap gate. To elaborate on this further, let us look at the quantity $E(XS)$ in some depth. We have 
\begin{equation}
    E(XS) = 1 - \frac{1}{d^4}\sum_{i=1}^l \mu_i^4 = 1 - \frac{1}{d^4}\operatorname{Tr}[(X^{\Gamma}X^{\Gamma \dagger})^2],
\end{equation}
where $S$ is the swap gate, $l=\Omega(XS)$ is the operator Schmidt rank of $XS$, and $\{\mu_i\}_{i=1}^l$ are the Schmidt coefficients of $XS$. $E(XS)$ stays invariant under local unitary operations, since $XS$ and $X'S = (U_1\otimes U_2)X(U_3\otimes U_4)S = (U_1\otimes U_2)XS(V_1\otimes V_2)$ are locally equivalent (for $U_i,V_i\in\U{d}$), see Remark~\ref{remark:localequiv}. Moreover, this quantity is in some sense complementary to the operator entanglement, since for local unitary operators $U_1,U_2\in \U{d}$, it is evident that 
\begin{equation}
    E(U_1\otimes U_2) = 0 \quad\text{but}\quad E((U_1\otimes U_2) S) = 1 - \frac{1}{d^2},
\end{equation}
and for the swap gate, we have $E(S)=1-1/d^2$ but $E(SS)=E(\iden_d\otimes\iden_d)=0$.

Now, if we choose the linear entropy $\mathcal{E}(\ket{\psi})=1-\operatorname{Tr}[(\operatorname{Tr}_2\rho_\psi)^2]$ as our measure of entanglement for pure bipartite states $\rho_\psi=\ketbra{\psi}$ (here $\ket{\psi}\in\C{d}\otimes\C{d}$ and $\operatorname{Tr}_2$ denotes the partial trace with respect to the second subsystem), we can define the \emph{entangling power} of $X\in\UU{d}$ as \cite{Zanardi2000ep}:
\begin{equation}
    e_p(X) = \left(\frac{d+1}{d-1}\right) \mathbb{E}_{\phi,\psi\sim\operatorname{Haar}}[\mathcal{E}(X\ket{\phi}\ket{\psi})],
\end{equation}
where the unit vectors $\ket{\phi},\ket{\psi}\in\C{d}$ are distributed independently and identically according to the Haar measure on the unit sphere in $\C{d}$. With this definition in hand, it can be shown that
\begin{equation}\label{eq:ep}
    e_p(X) = \frac{1}{E(S)}[E(X)+E(XS)-E(S)],
\end{equation}
see \cite{Zanardi2001ep}. Clearly, $0\leq E(X),E(XS)\leq E(S)\implies 0\leq e_p(X)\leq 1$ and local gates have zero entangling power. It is worthwhile to note that even though the swap gate has maximum operator entanglement, its entangling power is zero since $S\ket{\phi}\ket{\psi} = \ket{\psi} \ket{\phi}$ for all unit vectors $\ket{\phi},\ket{\psi}\in\C{d}$.
Thus, by looking at the entangling power alone, we cannot distinguish between local gates and the swap gate. Hence, it is meaningful to study another linear combination of the quantities $E(X)$ and $E(XS)$ for a given $X\in\UU{d}$, which is known as the \emph{gate typicality} of $X$ \cite{Jonnadul2017gt}:
\begin{equation}\label{eq:gt}
    g_t(X) = \frac{1}{2E(S)}[E(X)-E(XS)+E(S)].
\end{equation}
Simple computations reveal that $0\leq g_t(X)\leq 1$. Moreover, $g_t(X)=0$ if and only if $X$ is a local gate and $g_t(X)=1$ if and only if $X$ is locally equivalent to a swap gate.

Now that we have all the relevant quantities defined, let's get back into $\LDOI_d$ to do some explicit calculations. As the following proposition illustrates, for an LDOI unitary matrix, we can easily compute the expressions for all the above introduced measures of non-locality in terms of the associated matrix triple. 

\begin{proposition}\label{prop:ELDOI}
For an LDOI unitary matrix $X^{(3)}_{(A,B,C)}\in \UU{d}$, the following relations hold:
\begin{align*}
    E\left(X^{(3)}_{(A,B,C)}\right) &= 1 - \frac{1}{d^4}\left[\operatorname{Tr}\big\{(AA^\dagger)^2\big\} + \sum_{i\neq j}\bigg\{\left(|B_{ij}|^2 + |C_{ij}|^2 \right)^2 + |B_{ij}\overbar{C_{ji}} + \overbar{B_{ji}}C_{ij}|^2 \bigg\}\right] \\
    E\left(X^{(3)}_{(A,B,C)}S\right) &= 1 - \frac{1}{d^4}\left[\operatorname{Tr}\big\{(CC^\dagger)^2\big\} + \sum_{i\neq j}\bigg\{\left(|B_{ij}|^2 + |A_{ij}|^2 \right)^2 + |B_{ij}\overbar{A_{ji}} + \overbar{B_{ji}}A_{ij}|^2 \bigg\}\right].
\end{align*}
\end{proposition}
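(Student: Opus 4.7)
The plan is to combine the two structural results for $\LDOI_d$ established earlier: the action of realignment and partial transposition (Proposition~\ref{prop:LDOI_symm}) together with the block diagonal decomposition (Proposition~\ref{prop:LDOI_block}). Since by Eq.~\eqref{eq:opent} we have $E(X) = 1 - d^{-4}\operatorname{Tr}[(X^R X^{R\dagger})^2]$, and the paper has already noted the companion identity $E(XS) = 1 - d^{-4}\operatorname{Tr}[(X^\Gamma X^{\Gamma\dagger})^2]$, the whole task reduces to computing two Hilbert--Schmidt-type traces of block diagonal matrices.

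For the first formula, I would first invoke Proposition~\ref{prop:LDOI_symm} to write $\left(X^{(3)}_{(A,B,C)}\right)^R = X^{(3)}_{(B,A,C)}$, and then Proposition~\ref{prop:LDOI_block} to decompose the realignment as
\begin{equation*}
\left(X^{(3)}_{(A,B,C)}\right)^R \,=\, A \,\oplus\, \bigoplus_{i<j} M_{ij}, \qquad M_{ij} := \begin{bmatrix} B_{ij} & C_{ij} \\ C_{ji} & B_{ji} \end{bmatrix}.
\end{equation*}
Since the adjoint and the trace both respect direct sums, this at once gives
\begin{equation*}
\operatorname{Tr}\bigl[(X^R X^{R\dagger})^2\bigr] \,=\, \operatorname{Tr}\bigl[(AA^\dagger)^2\bigr] + \sum_{i<j}\operatorname{Tr}\bigl[(M_{ij}M_{ij}^\dagger)^2\bigr].
\end{equation*}
A direct $2\times 2$ calculation then yields $\operatorname{Tr}[(M_{ij}M_{ij}^\dagger)^2] = (|B_{ij}|^2+|C_{ij}|^2)^2 + (|B_{ji}|^2+|C_{ji}|^2)^2 + 2\,|B_{ij}\overbar{C_{ji}} + \overbar{B_{ji}}C_{ij}|^2$.

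The only remaining step is the re-indexing: summed over $i<j$, the two ``diagonal'' contributions reassemble into $\sum_{i\neq j}(|B_{ij}|^2+|C_{ij}|^2)^2$, and the factor of $2$ on the cross term is precisely what is needed, since $|B_{ij}\overbar{C_{ji}} + \overbar{B_{ji}}C_{ij}|$ is invariant under $(i,j)\mapsto(j,i)$ (the two reindexed inner expressions are complex conjugates of one another, hence have equal modulus). Substituting produces the first identity. The second formula is obtained by the same argument with $R$ replaced by $\Gamma$: Proposition~\ref{prop:LDOI_symm} gives $\left(X^{(3)}_{(A,B,C)}\right)^\Gamma = X^{(3)}_{(A,C,B)}$, which by Proposition~\ref{prop:LDOI_block} decomposes as $C \oplus \bigoplus_{i<j} N_{ij}$ with $N_{ij} = \begin{bmatrix} A_{ij} & B_{ij} \\ B_{ji} & A_{ji} \end{bmatrix}$. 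Rerunning the same $2\times 2$ trace computation, with the pair $(A,B)$ in the role formerly played by $(B,C)$ and the leading block $CC^\dagger$ in place of $AA^\dagger$, delivers the second expression. No step in this plan poses a real obstacle; the only thing that warrants care is the bookkeeping needed to convert the $i<j$ indexing produced by the block decomposition into the symmetric $i\neq j$ form used in the statement.
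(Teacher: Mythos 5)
Your proof is correct, but it takes a genuinely different route from the paper's. The paper computes $\operatorname{Tr}[(X^R X^{R\dagger})^2]$ algebraically: it writes $X^{R}=X^{(3)}_{(B,A,C)}$ and $X^{R\dagger}=X^{(3)}_{(\bar B,A^\dagger,C^\dagger)}$, applies the triple product ``$\cdot$'' of Definition~\ref{def:product-MLDOI} twice via Lemma~\ref{lemma:LDOIprod} to get the triple $(\mathfrak A,\mathfrak B,\mathfrak C)$ representing $(X^RX^{R\dagger})^2$, and reads off the trace as $\sum_{i,j}\mathfrak A_{ij}$; the second formula is then obtained by applying the first to $XS=X^{(3)}_{(C,B,A)}$ rather than by passing through the partial transpose. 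You instead feed the realigned (resp.\ partially transposed) matrix into the block decomposition of Proposition~\ref{prop:LDOI_block} and reduce everything to $\operatorname{Tr}[(MM^\dagger)^2]$ for explicit $2\times 2$ blocks, followed by the $i<j\to i\neq j$ re-indexing. Both arguments are sound; I checked your $2\times 2$ trace identity and the conjugation symmetry that absorbs the factor of $2$, and they are right (note also that $B_{ij}\overbar{A_{ji}}+\overbar{B_{ji}}A_{ij}$ is literally the same expression as your $A_{ij}\overbar{B_{ji}}+\overbar{A_{ji}}B_{ij}$, so the second formula matches term by term). Your route is arguably the more elementary and transparent one, since it avoids iterating the somewhat involved product formula on triples; the paper's route has the advantage of exercising the algebraic machinery ($\MLDOI{d}$ as an algebra under ``$\cdot$'') that it develops for other purposes, and of producing the full triple $(\mathfrak A,\mathfrak B,\mathfrak C)$ rather than only the trace. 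One small point of care, which you implicitly handle correctly: the direct sum in Proposition~\ref{prop:LDOI_block} holds up to a permutation of the product basis, but this is harmless since the quantity $\operatorname{Tr}[(YY^\dagger)^2]$ is invariant under such conjugation.
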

\begin{proof}
Recall from Eq.~\eqref{eq:opent} that
\begin{align*}
    E\left(X^{(3)}_{(A,B,C)}\right) &= 1 - \frac{1}{d^4} \operatorname{Tr}\left[\left(X^{(3)R}_{(A,B,C)}X^{(3)R\dagger}_{(A,B,C)}\right)^2\right]  \\
    &= 1 - \frac{1}{d^4}\operatorname{Tr}\left[\left(X^{(3)}_{(B,A,C)} X^{(3)}_{(\bar{B},A^\dagger,C^\dagger)}\right)^2\right] \\
    &= 1 - \frac{1}{d^4}\operatorname{Tr}\left(X^{(3)}_{(\mathfrak{A},\mathfrak{B},\mathfrak{C})}\right),
\end{align*}
where 
\begin{equation*}
    (\mathfrak{A},\mathfrak{B},\mathfrak{C}) = [(B,A,C)\cdot(\overbar{B},A^\dagger,C^\dagger)]\cdot [(B,A,C)\cdot(\overbar{B},A^\dagger,C^\dagger)].
\end{equation*}
Note that the product ``$\cdot$'' used above was introduced in Definition~\ref{def:product-MLDOI} and we have used Lemma~\ref{lemma:LDOIprod} twice in obtaining the above formula. Now, since $\operatorname{Tr}(X^{(3)}_{(\mathfrak{A},\mathfrak{B},\mathfrak{C})}) = \sum_{i,j}\mathfrak{A}_{ij}$, we can easily obtain the desired result. For the second relation, observe that $X^{(3)}_{(A,B,C)}S=X^{(3)}_{(C,B,A)}$.
\end{proof}

\begin{remark}
Once we have the expressions for $E(X)$ and $E(XS)$ for any $X\in\LDOI_d$, the entangling power and gate typicality can also be easily computed from Eqs.~\eqref{eq:ep} and \eqref{eq:gt}.
\end{remark}

In order to motivate our next result, let us first recall from Proposition~\ref{prop:perfect_LDOI} that perfect unitary matrices do not exist in the LDOI subspace. However, we now prove that in the asymptotic limit of the matrix size $d\rightarrow \infty$, it is possible to get arbitrarily close to the set of perfect unitary matrices within the LDOI subspace.

\begin{theorem}\label{theorem:maxent_dualLDUI}
The maximum entangling power of an LDUI dual unitary matrix $X^{(1)}_{(A,C)}$ is achieved when $C$ is a complex Hadamard matrix, with the maximum value being  
\begin{equation*}
    \max_{\substack{\vspace{0.05cm} \\ X\in \LDUI_d \\ X,X^R\in \UU{d}}} e_p(X) = \frac{d}{d+1} \rightarrow 1 \text{ as } d\rightarrow \infty.
\end{equation*}
\end{theorem}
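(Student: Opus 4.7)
The plan is to exploit the particularly rigid form of LDUI dual unitary matrices given by Proposition~\ref{prop:dual}, which reduces the problem to an elementary optimization over matrices with unimodular entries. First I would recall the characterization: $X^{(1)}_{(A,C)}$ is LDUI dual unitary iff $|C_{ij}| = 1$ for all $i,j \in [d]$ and $A = \operatorname{diag}(C)$. In the LDOI formulation $X^{(3)}_{(A,B,C)}$, the LDUI condition forces $B$ to be diagonal, and combined with $\operatorname{diag}(A) = \operatorname{diag}(B)$ we get $B = A = \operatorname{diag}(C)$. Crucially, both $A$ and $B$ are diagonal, so their off-diagonal entries vanish.

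Next, since $X$ is dual unitary, $X^R \in \UU{d}$ implies $E(X) = 1 - 1/d^2 = E(S)$, so Eq.~\eqref{eq:ep} collapses to
\begin{equation*}
    e_p(X) = \frac{E(X) + E(XS) - E(S)}{E(S)} = \frac{E(XS)}{E(S)}.
\end{equation*}
Thus maximizing $e_p$ is equivalent to maximizing $E(XS)$. Applying Proposition~\ref{prop:ELDOI} with $A_{ij} = B_{ij} = 0$ for $i \neq j$, the double sum vanishes entirely and we are left with
\begin{equation*}
    E\bigl(X^{(1)}_{(A,C)} S\bigr) = 1 - \frac{1}{d^4}\operatorname{Tr}\bigl\{(CC^\dagger)^2\bigr\} = 1 - \frac{1}{d^4}\|CC^\dagger\|_{HS}^2.
\end{equation*}

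The problem is now reduced to minimizing $\|CC^\dagger\|_{HS}^2 = \sum_{i,j}|(CC^\dagger)_{ij}|^2$ over matrices $C$ with $|C_{ij}|=1$. The diagonal is already pinned: $(CC^\dagger)_{ii} = \sum_k |C_{ik}|^2 = d$, contributing $d \cdot d^2 = d^3$ to the Hilbert--Schmidt norm squared. Hence $\|CC^\dagger\|_{HS}^2 \geq d^3$, with equality iff $(CC^\dagger)_{ij} = 0$ for all $i \neq j$, i.e.\ $CC^\dagger = d\,\iden_d$, which is exactly the defining property of a complex Hadamard matrix.

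Substituting the optimum back, $E(XS) = 1 - 1/d$, and
\begin{equation*}
    \max e_p(X) = \frac{1 - 1/d}{1 - 1/d^2} = \frac{d(d-1)}{(d-1)(d+1)} = \frac{d}{d+1},
\end{equation*}
which tends to $1$ as $d \to \infty$. I do not foresee a serious obstacle: once Proposition~\ref{prop:ELDOI} is invoked and the diagonal structure of $A$ and $B$ kills the off-diagonal sum, the remaining optimization is a one-line application of the constraint that each row of $C$ has unit-modulus entries. The only mild subtlety worth stating cleanly is the equivalence between the optimum being attained and $C$ being Hadamard, which follows from the tight inequality $\|CC^\dagger\|_{HS}^2 \geq d^3$ under the pointwise modulus constraint.
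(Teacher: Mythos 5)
Your proposal is correct and follows essentially the same route as the paper: reduce $e_p$ to $E(XS)/E(S)$ using dual unitarity, invoke Proposition~\ref{prop:ELDOI} with $A=B=\operatorname{diag}(C)$ so that only the $\operatorname{Tr}[(CC^\dagger)^2]$ term survives, and bound that trace below by $d^3$ with equality exactly for complex Hadamard $C$. The only difference is cosmetic --- you spell out slightly more explicitly why the off-diagonal sum vanishes.
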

\begin{proof}
Recall that a dual unitary matrix $X$ has maximal operator entanglement $E(X)=E(S)$, so that we have the following expression for its entangling power:
\begin{equation*}
    e_p(X) = \frac{1}{E(S)}[E(X)+E(XS)-E(S)] = \frac{E(XS)}{E(S)}.
\end{equation*}
Thus, maximization of $e_p(X)$ is the same as maximization of $E(XS)$ for dual unitary matrices. In particular, for $X^{(1)}_{(A,C)}\in\LDUI_d$ with $C_{ij}\in\mathbb{T}$ and $A=\operatorname{diag}C$ (Proposition~\ref{prop:dual}), we have 
\begin{align*}
    E\left(X^{(1)}_{(A,C)}S\right) = 1 - \frac{1}{d^4} \operatorname{Tr}[(CC^\dagger)^2],
\end{align*}
see Proposition~\ref{prop:ELDOI} (recall that $X^{(1)}_{(A,C)}=X^{(3)}_{(A,\operatorname{diag}C,C)}$). Moreover, since 
\begin{equation*}
    \operatorname{Tr}[(CC^\dagger)^2] = \sum_{ij} \left|\sum_k C_{ik}\overbar{C_{jk}} \right|^2 = \sum_{i} \left| \sum_k |C_{ik}|^2 \right|^2 + \sum_{i\neq j} \left|\ldots \right|^2 = d^3 + \sum_{i\neq j} \left|\ldots \right|^2 \geq d^3, 
\end{equation*}
and equality is attained if and only if $CC^\dagger=d\iden_d$ (i.e. $C$ is a complex Hadamard matrix), it is clear that $e_p(X^{(1)}_{(A,C)})$ attains its maximum value (within the LDUI dual unitary class) whenever $C$ is a complex Hadamard matrix. Finally, this maximum value can be easily evaluated:
\begin{equation*}
    e_p\left(X^{(1)}_{(A,C)}\right) = \frac{ E\left(X^{(1)}_{(A,C)}S\right)}{E(S)} = \frac{1 - 1/d}{1 - 1/d^2} = \frac{d}{d+1}.
\end{equation*}
\end{proof}

\begin{remark}
Since complex Hadamard matrices exist for each dimension $d\in\mathbb{N}$, the maximum entangling power stated in Theorem~\ref{theorem:maxent_dualLDUI} is attained for all $d\in\mathbb{N}$.
\end{remark}

Following on the previous remark, in the real case of dual orthogonal LDUI matrices, the question of whether the maximum entangling power is attained or not is more subtle. This is because real Hadamard matrices can only exist when the dimension $d$ is $1, 2$ or a multiple of $4$. Moreover, the existence of real Hadamard matrices of order $d=4n$ (for each $n\in\mathbb{N}$) is a fundamental open problem in combinatorics. The following quantity appears in the proof of Theorem \ref{theorem:maxent_dualLDUI}: for a sign matrix $C \in \mathcal M_d(\{\pm 1\})$, define
\begin{equation}\label{eq:Hada_measure}
\mathfrak h(C) = \operatorname{Tr}[(CC^\dagger)^2] = \sum_{i,j=1}^d |\langle C_{i,\cdot}, C_{j, \cdot} \rangle|^2 = \sum_{i,j=1}^d \left|\sum_{k=1}^d C_{ik}\overbar{C_{jk}} \right|^2,   
\end{equation}
where $C_{i,\cdot}$ denotes the $i$-th row of the matrix $C$. The following lemma was proven alongside Theorem \ref{theorem:maxent_dualLDUI} and motivates calling the quantity $\mathfrak h$ a measure of the ``Hadamardness'' of the sign matrix $C$. 

\begin{lemma}
For a sign matrix $C\in \mathcal M_d(\{\pm 1\})$, we have $\mathfrak h(C)  \geq d^3$, with equality being attained if and only if $C$ is a Hadamard matrix ($CC^\dagger = d\iden_d$). 
\end{lemma}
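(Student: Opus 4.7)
The plan is to exploit the fact that the rows of a sign matrix $C\in \mathcal M_d(\{\pm 1\})$ each have squared norm exactly $d$, which forces the diagonal entries of $CC^\dagger$ to equal $d$, so that $\mathfrak h(C)$ decomposes cleanly into a fixed diagonal piece plus a nonnegative off-diagonal piece.

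Concretely, starting from the middle expression in Eq.~\eqref{eq:Hada_measure},
\begin{equation*}
    \mathfrak h(C) = \sum_{i,j=1}^d |\langle C_{i,\cdot}, C_{j,\cdot}\rangle|^2 = \sum_{i=1}^d |\langle C_{i,\cdot}, C_{i,\cdot}\rangle|^2 + \sum_{i\neq j} |\langle C_{i,\cdot}, C_{j,\cdot}\rangle|^2.
\end{equation*}
Since $C_{ik}\in\{\pm 1\}$, each row satisfies $\langle C_{i,\cdot},C_{i,\cdot}\rangle = \sum_k |C_{ik}|^2 = d$, so the first sum contributes exactly $d\cdot d^2 = d^3$. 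The second sum is manifestly nonnegative, which immediately yields $\mathfrak h(C)\geq d^3$. Equality holds if and only if $\langle C_{i,\cdot},C_{j,\cdot}\rangle = 0$ for all $i\neq j$, which is precisely the condition $CC^\dagger = d\,\iden_d$ defining a (real) Hadamard matrix.

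There is really no obstacle here, since the proof is essentially one line once the diagonal is normalized. If one prefers a more structural argument, the same conclusion follows from Cauchy--Schwarz applied to the eigenvalues $\lambda_1,\ldots,\lambda_d\geq 0$ of the PSD matrix $CC^\dagger$: one has $\sum_i \lambda_i = \operatorname{Tr}(CC^\dagger) = d^2$, hence
\begin{equation*}
    \mathfrak h(C) = \sum_i \lambda_i^2 \,\geq\, \frac{1}{d}\Bigl(\sum_i \lambda_i\Bigr)^2 = d^3,
\end{equation*}
with equality iff all $\lambda_i$ are equal, i.e.\ $CC^\dagger = d\,\iden_d$. Either route confirms the bound and its saturation condition, so I would simply present the row-decomposition argument as it makes the role of the $\pm 1$ entries most transparent.
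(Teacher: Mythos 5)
Your row-decomposition argument is correct and is essentially identical to the paper's proof, which also splits $\mathfrak h(C)$ into the diagonal contribution $d^3$ plus a nonnegative sum over off-diagonal row inner products. The Cauchy--Schwarz argument on the eigenvalues of $CC^\dagger$ is a valid alternative not present in the paper, but the main route you chose matches the published one.
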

\begin{proof}
Write the expression in Eq.~\eqref{eq:Hada_measure} as 
$$\mathfrak{h}(C) = \sum_{i=1}^d \left| \sum_{k=1}^d |C_{ik}|^2 \right|^2 + 2\sum_{1 \leq i < j \leq d} \left|\sum_{k=1}^d C_{ik}\overbar{C_{jk}} \right|^2 = d^3 +   2\sum_{1 \leq i < j \leq d} \left|\sum_{k=1}^d C_{ik}\overbar{C_{jk}} \right|^2,$$
with equality if and only if all the terms $\sum_{k=1}^d C_{ik}\overbar{C_{jk}}$ are zero. But this is precisely the condition that the rows of the sign matrix $C$ are orthogonal, i.e.~the matrix $C$ is Hadamard. 
\end{proof}

Clearly, for dimensions for which (real) Hadamard matrices exist, $\min \mathfrak h(C) = d^3$, where the minimum is taken over sign matrices $C$. For dimensions $d\geq 3$ which are not multiples of $4$, we list in Table \ref{tbl:mathfrak-h} the values of the minimum of $\mathfrak h$ obtained by enumerating sign matrices. 

\bgroup
\def\arraystretch{1.5}

\begin{table}[H]
\begin{tabular}{|rccc|}
\hline
\rowcolor[HTML]{9B9B9B} 
\multicolumn{4}{|c|}{\cellcolor[HTML]{9B9B9B}$\min \mathfrak h(C)$ for sign matrices $C$}                                                                                                                                                                  \\ \hline
\rowcolor[HTML]{C0C0C0} 
\multicolumn{1}{|r|}{\cellcolor[HTML]{C0C0C0}Matrix size} & \multicolumn{1}{c|}{\cellcolor[HTML]{C0C0C0}$\min \mathfrak h(C)$} & \multicolumn{1}{c|}{\cellcolor[HTML]{C0C0C0}$\# \mathrm{argmin} \, \mathfrak h$} & $C \in \mathrm{argmin} \, \mathfrak h$ \\ \hline
\multicolumn{1}{|r|}{$d=3$}                               & \multicolumn{1}{c|}{33}                                            & \multicolumn{1}{c|}{6}                                                           &    
\bgroup
\def\arraystretch{1}
$\begin{bmatrix}
1 & \phantom{-}1 & \phantom{-}1 \\
1 & \phantom{-}1 & -1 \\
1 & -1 & \phantom{-}1 
\end{bmatrix}
$
\egroup
\\ \hline
\multicolumn{1}{|r|}{$d=5$}                               & \multicolumn{1}{c|}{145}                                           & \multicolumn{1}{c|}{120}                                                         & 
\bgroup
\def\arraystretch{1}
$\begin{bmatrix}
1 & \phantom{-}1 & \phantom{-}1 & \phantom{-}1 & \phantom{-}1\\
1 & \phantom{-}1 & \phantom{-}1 & -1 & -1\\
1 & \phantom{-}1 & -1 & \phantom{-}1 & -1 \\
1 & -1 & \phantom{-}1 & \phantom{-}1 & -1\\
1 & -1 & -1 & -1 & \phantom{-}1
\end{bmatrix}
$
\egroup
\\ \hline
\multicolumn{1}{|r|}{$d=6$}                               & \multicolumn{1}{c|}{264}                                              & \multicolumn{1}{c|}{28800}                                                            & \bgroup
\def\arraystretch{1}
$\begin{bmatrix}
1 & \phantom{-}1 & \phantom{-}1 & \phantom{-}1 & \phantom{-}1 & \phantom{-}1\\
1 & \phantom{-}1 & \phantom{-}1 & \phantom{-}1 & -1 & -1\\
1 & \phantom{-}1 & \phantom{-}1 & -1 & \phantom{-}1 & -1 \\
1 & \phantom{-}1 & -1 & -1 & -1 & \phantom{-}1\\
1 & -1 & \phantom{-}1 & -1 & -1 & \phantom{-}1\\
1 & -1 & -1 & \phantom{-}1 & \phantom{-}1 & -1
\end{bmatrix}
$
\egroup                                       \\ \hline
\end{tabular}
\caption{Values for $\min \mathfrak h$ over sign matrices for dimensions which are not multiples of $4$. We also print the number of elements achieving the minimal value (in dephased form, with first row and columns being ones), as well as the first such sign matrix in lexicographic order.}
\label{tbl:mathfrak-h}
\end{table}
\egroup

We note that for the odd values of $d$ considered above, $\min \mathfrak h(C) = d^3 + d(d-1)$, obtained for matrices $C$ with the property that all the scalar products between different works have absolute value $1$. We conjecture that this is the case for all odd integers $d$.

\section{Distinguishability of LDOI unitary matrices}\label{sec:distinguishability}

Suppose we know that a quantum operation is equally likely to be either one of the two unitary operators $X_1,X_2\in \U{d}$. Our task is to distinguish between the two equally likely choices. Clearly, $X_1,X_2$ can be \emph{perfectly discriminated} if and only if there exists a probe state $\ket{\psi}\in\C{d}$ ($||\psi ||=1)$ such that $X_1\ket{\psi}$ and $X_2\ket{\psi}$ are orthogonal, i.e. $\langle \psi|X_2^{\dagger}X_1|\psi\rangle=0$, since then (and only then) we can use the Holevo-Helstrom theorem to find an optimal measurement that perfectly discriminates between the states $X_1\ket{\psi}$ and $X_2\ket{\psi}$. Historically, these results were first noted in \cite{Acin2001dist,Ariano2001dist}. We introduce the notion of numerical range to formulate the above discussion.

\begin{proposition}\label{prop:discrim}
For $X\in\M{d}$, we define its \emph{numerical range} as $$N(X):=\left\{\langle\psi |X|\psi\rangle \,\, \vert \,\, \ket{\psi}\in \C{d}, ||\psi ||=1\right\}.$$
Then, $X_1,X_2\in \U{d}$ can be perfectly discriminated if and only if $0\in N(X_2^{\dagger}X_1)$.
\end{proposition}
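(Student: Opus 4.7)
The proof is essentially a direct translation of the discrimination criterion (sketched just before the proposition) into the language of the numerical range, so my plan is to carry out that translation carefully and cite the underlying distinguishability fact only once. First I would recall, following the paragraph preceding the proposition, that $X_1, X_2 \in \U{d}$ can be perfectly discriminated if and only if there exists a unit vector $\ket{\psi} \in \C{d}$ such that the two possible output states $X_1\ket{\psi}$ and $X_2\ket{\psi}$ are orthogonal. The forward direction of this standard fact uses the Holevo--Helstrom theorem: given orthogonal $X_1\ket{\psi}$, $X_2\ket{\psi}$, the projective measurement $\{X_1\ketbra{\psi}X_1^\dagger,\, \iden_d - X_1\ketbra{\psi}X_1^\dagger\}$ discriminates them deterministically. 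The converse follows from the fact that any non-zero overlap $|\langle X_2\psi\mid X_1\psi\rangle| > 0$ strictly bounds the optimal success probability below one for every choice of POVM.

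Next I would perform the only algebraic step in the proof, namely
\begin{equation*}
    \langle X_2\psi\mid X_1\psi\rangle \;=\; \langle \psi \mid X_2^\dagger X_1 \mid \psi \rangle .
\end{equation*}
Thus the orthogonality condition on the outputs is equivalent to the scalar $\langle \psi \mid X_2^\dagger X_1 \mid \psi\rangle$ being zero. By the definition of the numerical range, the existence of a unit vector $\ket{\psi}$ realizing this equality is precisely the statement that $0 \in N(X_2^\dagger X_1)$. Combining this equivalence with the discrimination criterion above yields both implications simultaneously.

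There is no genuine obstacle here: the result is a reformulation rather than a substantive theorem, and the only content is the identification of the orthogonality condition for the two output states with a point in the numerical range of $X_2^\dagger X_1$. The closest thing to a subtlety is ensuring that the Holevo--Helstrom reduction is applied to \emph{pure} output states (which it is, since $X_1, X_2$ are unitary and $\ket{\psi}$ is pure), so no mixed-state refinement of the criterion is needed.
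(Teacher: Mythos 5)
Your proposal is correct and follows essentially the same route as the paper, which does not give a separate proof but justifies the proposition in the preceding paragraph via exactly this argument: perfect discrimination is equivalent to the existence of a probe state with orthogonal outputs (Holevo--Helstrom in one direction, the strict Helstrom bound for nonzero overlap in the other), and the identity $\langle X_2\psi \mid X_1\psi\rangle = \langle\psi|X_2^\dagger X_1|\psi\rangle$ translates this into $0 \in N(X_2^\dagger X_1)$. No gaps.
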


It is reasonable to think that even if $X_1,X_2\in\U{d}$ cannot be perfectly discriminated in the sense of Proposition~\ref{prop:discrim}, we can still find a suitable $d'$-level ancillary system such that $X_1\otimes \iden$ and $X_2\otimes \iden$ in $\mathcal{U}(d\otimes d')$ can be perfectly discriminated (by using entangled probe states, for instance). Unfortunately, this intuition is flawed, since by using the convexity of the numerical range -- which itself is a consequence of the celebrated Toeplitz-Hausdorff theorem \cite{Toeplitz1918,Hausdorff1919} -- it is easy to show that $N(X\otimes \iden) = N(X)$ for all $X\in \M{d}$.

For normal operators $X\in\M{d}$, one can use the spectral theorem to show that $N(X)$ is simply the convex hull of all the eigenvalues, i.e. $N(X) = \operatorname{conv}\operatorname{spec}X$. In particular, for unitary operators $X\in\U{d}$, if we denote by $\theta(X)$ the length of the smallest arc (in radians) that contains all the eigenvalues of $X$ on the unit circle, then it should be clear that the following equivalences hold:
\begin{equation}
    X_1,X_2\in \U{d} \text{ can be perfectly discriminated } \iff 0\in N(X_2^{\dagger}X_1) \iff \theta(X_2^{\dagger}X_1)\geq \pi.
\end{equation}
Moreover, even if $\theta(X_2^{\dagger}X_1)< \pi$, it is possible to find a finite $k\in\mathbb{N}$ such that $\theta[(X_2^{\dagger}X_1)^{\otimes k}]\geq \pi$. This is because for any $X,Y\in \M{d}$, the spectrum of the tensor product $X\otimes Y$ is precisely the set of all pairwise products of eigenvalues of $X$ and $Y$, which clearly implies that 
\begin{equation}
    \forall k\in\mathbb{N}, \forall X\in\U{d}: \quad \theta(X^{\otimes k})= \min\{k\theta(X), 2\pi \}.
\end{equation}

We can thus formulate a more general version of Proposition~\ref{prop:discrim} as follows.
\begin{proposition}\label{prop:discrim2}
For $X_1,X_2\in\U{d}$, define $$k(X_1,X_2):= \bigg\lceil \frac{\pi}{\theta(X_2^{\dagger}X_1)}  \bigg\rceil,$$ 
where $\lceil \cdot \rceil$ denotes the ceiling function. Then, $k$ is the minimum positive integer such that $X_1^{\otimes k}$ and $X_2^{\otimes k}$ can be perfectly discriminated.
\end{proposition}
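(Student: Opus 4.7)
The plan is to reduce the question, via the single-copy criterion in Proposition~\ref{prop:discrim}, to a spectral analysis of $Y^{\otimes k}$, where $Y := X_2^\dagger X_1 \in \U{d}$. Applied to the pair $(X_1^{\otimes k}, X_2^{\otimes k})$, Proposition~\ref{prop:discrim} says that perfect $k$-copy discrimination is equivalent to $0 \in N((X_2^{\otimes k})^\dagger X_1^{\otimes k}) = N(Y^{\otimes k})$. Since $Y^{\otimes k}$ is unitary and hence normal, the spectral theorem gives $N(Y^{\otimes k}) = \operatorname{conv}\operatorname{spec}(Y^{\otimes k})$, and $0$ lies in this convex hull exactly when $\operatorname{spec}(Y^{\otimes k})$ is not contained in any open half-circle, i.e.\ when $\theta(Y^{\otimes k}) \geq \pi$. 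The task therefore becomes: find the smallest $k$ satisfying this inequality.

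Writing $\operatorname{spec}(Y) = \{e^{i\alpha_1},\ldots,e^{i\alpha_d}\}$ with (after a harmless global rotation) $\alpha_1 = 0$, $\alpha_d = \theta := \theta(Y)$, and all other $\alpha_j \in [0, \theta]$, the spectrum of $Y^{\otimes k}$ consists of the angles $\alpha_{j_1} + \cdots + \alpha_{j_k} \pmod{2\pi}$. It is therefore contained in the arc $[0, k\theta] \bmod 2\pi$, and it contains the arithmetic progression $P_k := \{e^{ij\theta} : 0 \leq j \leq k\}$ obtained by tensoring only the two extremal eigenvalues. If $\theta \geq \pi$ then $k=1$ already works and $\lceil \pi/\theta \rceil = 1$, so I may assume $\theta < \pi$. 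The easy direction is then immediate: if $k\theta < \pi$, the spectrum of $Y^{\otimes k}$ sits inside an arc of length $k\theta < \pi$, forcing $\theta(Y^{\otimes k}) \leq k\theta < \pi$.

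The main obstacle is the converse implication, that $k\theta \geq \pi$ forces $\theta(Y^{\otimes k}) \geq \pi$. I would prove it via a short geometric lemma on $P_k$: for $\theta \in (0, \pi)$ and $k\theta \geq \pi$, the progression $P_k$ cannot fit in any open arc of length $L < \pi$. Arguing by contradiction, if it did, then since $1 = e^{i \cdot 0} \in P_k$ lies in the arc, we may parameterise it as $(\gamma, \gamma + L)$ with $\gamma \in (-L, 0)$. An induction on $j$ then shows each iterate $j\theta$ lies in $(\gamma, \gamma + L)$ \emph{as a real number}, with no modular wrap-around: because $\theta < \pi < 2\pi - L$, neither of the shifted windows $(\gamma \pm 2\pi, \gamma + L \pm 2\pi)$ can be reached in a single step of size $\theta$ from a point of the arc, so the only way $(j+1)\theta \bmod 2\pi$ can lie in $(\gamma, \gamma + L)$ is that $(j+1)\theta$ itself does. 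The terminal estimate $k\theta < \gamma + L < L < \pi$ then contradicts $k\theta \geq \pi$.

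Combining both directions yields the equivalence $\theta(Y^{\otimes k}) \geq \pi \iff k\theta(Y) \geq \pi$, and the smallest integer $k$ fulfilling the right-hand side is by definition $\lceil \pi/\theta(Y) \rceil = k(X_1, X_2)$, which is exactly the statement of the proposition.
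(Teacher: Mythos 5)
Your proof is correct, and it follows the same overall route as the paper: reduce perfect discrimination of the $k$-th tensor powers to the condition $\theta(Y^{\otimes k}) \geq \pi$ for $Y = X_2^\dagger X_1$ via the numerical range of normal operators, and then analyze $\operatorname{spec}(Y^{\otimes k})$ as the set of $k$-fold products of eigenvalues of $Y$. The difference is in how the key spectral step is handled. The paper asserts the exact identity $\theta(X^{\otimes k}) = \min\{k\theta(X), 2\pi\}$ as an immediate consequence of the tensor-product spectrum, and the proposition then follows; you instead prove only the equivalence $\theta(Y^{\otimes k}) \geq \pi \iff k\,\theta(Y) \geq \pi$, via the no-wrap-around induction on the arithmetic progression $P_k$. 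This extra care actually pays off: the paper's identity is not literally true in general. For instance, if $Y \in \U{2}$ has eigenvalues $1$ and $e^{3\pi i/5}$, then $\theta(Y) = 3\pi/5$, while $\operatorname{spec}(Y^{\otimes 3})$ sits at angles $\{0, 3\pi/5, 6\pi/5, 9\pi/5\}$, whose smallest covering arc has length $2\pi - 3\pi/5 = 7\pi/5$ rather than $\min\{9\pi/5, 2\pi\} = 9\pi/5$ (once $(k+1)\theta > 2\pi$, the largest complementary gap need no longer be the one between $k\theta$ and $2\pi$). The weaker statement you prove is exactly what the proposition requires, and your lemma establishes it rigorously. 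The only point glossed over by both you and the paper is the degenerate case $\theta(X_2^\dagger X_1) = 0$, where $X_1$ and $X_2$ differ by a global phase and no finite $k$ achieves discrimination.
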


For LDOI unitary matrices, we can easily provide a simple upper bound on $k$ as follows.

\begin{proposition}
Let $X^{(3)}_{(A,B,C)},X^{(3)}_{(A',B',C')}$ be \emph{LDOI} unitary matrices and let
\begin{equation*}
    k_{\mathfrak{AC}} := \min_{i<j}\left\{ k\left[ \left(\begin{array}{c c}
   A_{ij} & C_{ij} \\
   C_{ji} & A_{ji}
\end{array} \right),\left(\begin{array}{c c}
   A'_{ij} & C'_{ij} \\
   C'_{ji} & A'_{ji}
\end{array} \right) \right] \right\}.
\end{equation*}
Then, the following bound holds:
\begin{align*}
 k[X^{(3)}_{(A,B,C)},X^{(3)}_{(A',B',C')}] \leq \min\left\{ k(B,B'), k_{\mathfrak{AC}} \right\}.
\end{align*}
\end{proposition}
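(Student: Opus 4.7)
The plan is to exploit the common block decomposition of $X=X^{(3)}_{(A,B,C)}$ and $X'=X^{(3)}_{(A',B',C')}$ provided by Proposition~\ref{prop:LDOI_block}. Writing $M_{ij} := \bigl[\begin{smallmatrix} A_{ij} & C_{ij} \\ C_{ji} & A_{ji}\end{smallmatrix}\bigr]$ and $M'_{ij}$ analogously, both $X$ and $X'$ are block diagonal with respect to the same orthogonal splitting $\C{d}\otimes\C{d} = \operatorname{span}\{\ket{ii}\}_{i=1}^{d}\oplus\bigoplus_{i<j}\operatorname{span}\{\ket{ij},\ket{ji}\}$. Since products of direct sums with compatible block structures are again direct sums of the block products, I obtain
$$X'^\dagger X \;=\; (B'^\dagger B)\;\oplus\;\bigoplus_{i<j}\bigl(M'^{\,\dagger}_{ij}\, M_{ij}\bigr),$$
where every block is a unitary matrix by Proposition~\ref{prop:unitary-condition} applied to $X$ and $X'$.

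Next, I would convert this direct sum into a statement about the arc length on $\mathbb{T}$. The spectrum of a direct sum of unitaries is the disjoint union of the spectra of the summands, and the smallest arc on the unit circle containing a finite set of points can only weakly enlarge when further points are included. Consequently,
$$\theta(X'^\dagger X) \;\geq\; \theta(B'^\dagger B) \qquad\text{and}\qquad \theta(X'^\dagger X) \;\geq\; \theta(M'^{\,\dagger}_{ij}\, M_{ij}) \quad\text{for every } i<j.$$

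Finally, the function $\theta\mapsto \lceil \pi/\theta\rceil$ is non-increasing on $(0,2\pi]$, so the inequalities above invert under the definition of $k$ in Proposition~\ref{prop:discrim2} to give
$$k\bigl(X,X'\bigr) \;\leq\; k(B,B') \qquad\text{and}\qquad k\bigl(X,X'\bigr) \;\leq\; k(M_{ij},M'_{ij}) \quad\text{for every } i<j.$$
Minimising the second family over $i<j$ reproduces $k_{\mathfrak{AC}}$, and combining the two bounds yields the claim $k(X,X')\leq \min\{k(B,B'),\,k_{\mathfrak{AC}}\}$. There is no genuine obstacle in the argument; the only points that deserve explicit mention are the multiplicativity of the direct sum decomposition and the monotonicity of the arc length functional under set inclusion, both of which are immediate. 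A natural follow-up question, which I would flag but not pursue here, is whether the bound can be saturated, and in particular whether the inequality can be strict due to eigenvalues of distinct blocks spanning a wider arc than any single block.
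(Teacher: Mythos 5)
Your proposal is correct and follows essentially the same route as the paper: both rest on the block decomposition of Proposition~\ref{prop:LDOI_block}, the fact that the spectrum of $X'^\dagger X$ is the union of the spectra of the blocks $B'^\dagger B$ and $M'^{\,\dagger}_{ij} M_{ij}$, the monotonicity of the smallest containing arc under set inclusion, and the resulting inequality $\theta(X'^\dagger X)\geq\max\{\theta(B'^\dagger B),\theta(M'^{\,\dagger}_{ij}M_{ij})\}$ combined with Proposition~\ref{prop:discrim2}. Your write-up is in fact slightly more explicit than the paper's, which leaves the multiplicativity of the block structure implicit.
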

\begin{proof}
The proof follows simply from Proposition~\ref{prop:discrim2} by noting that the bound 
\begin{equation}
     \theta[X^{(3)}_{A,B,C}] \geq \max\left\{\theta(B),\theta\left[ \left(\begin{array}{c c}
   A_{ij} & C_{ij} \\
   C_{ji} & A_{ji}
\end{array} \right)\right]\right\}
\end{equation}
holds for all LDOI unitary matrices $X^{(3)}_{(A,B,C)}$ and $i<j$. The validity of the stated bound can be checked by computing the eigenvalues of $X^{(3)}_{(A,B,C)}$, which are given by (see Proposition~\ref{prop:LDOI_block})
\begin{equation}
    \operatorname{spec} X^{(3)}_{(A,B,C)} = \operatorname{spec} B \cup \bigcup_{i<j} \operatorname{spec} \begin{bmatrix} A_{ij} & C_{ij} \\ C_{ji} & A_{ji} \end{bmatrix}.
\end{equation}
\end{proof}

We should emphasize that the above proposition deals with the problem of \emph{global} discrimination of bipartite unitary operators. More precisely, we have assumed that both subsystems upon which the bipartite operators $X_1,X_2\in \UU{d}$ act are in the possession of a single party, thus enabling the use of entangled probe states $\ket{\psi}\in \C{d}\otimes \C{d}$ in the discrimination process. It is thus quite natural to ask what happens if the underlying subsystems are in the control of different spatially separated parties, so that they only have local resources at their disposal to conduct the discrimination process. This question was first considered and solved in \cite{Duan2008localNR}.

\begin{proposition}
For $X\in\M{d}\otimes \M{d}$, we define its \emph{local numerical range} as $$N^{\otimes}(X):=\left\{\langle\psi \otimes \phi |X|\psi\otimes \phi\rangle \, : \, \ket{\psi},\ket{\phi}\in \C{d}, ||\psi ||=||\phi ||=1\right\}.$$
Then, $X_1,X_2\in \UU{d}$ can be locally perfectly discriminated if and only if $0\in N^{\otimes}(X_2^{\dagger}X_1)$.
\end{proposition}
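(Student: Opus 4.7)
The plan is to give a direct two-way argument that mirrors the proof of Proposition~\ref{prop:discrim} in the global case, with product probe states replacing arbitrary ones and the Walgate--Short--Hardy--Vedral theorem playing the role of the Holevo--Helstrom bound in the final discrimination step.

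For the \emph{``if''} direction, I would start from a witness of $0\in N^{\otimes}(X_2^{\dagger}X_1)$, i.e.~unit vectors $\ket{\psi},\ket{\phi}\in\C{d}$ satisfying $\langle\psi\otimes\phi|X_2^{\dagger}X_1|\psi\otimes\phi\rangle=0$. This says precisely that the two candidate output states $X_1(\ket{\psi}\otimes\ket{\phi})$ and $X_2(\ket{\psi}\otimes\ket{\phi})$ are orthogonal. Alice and Bob can then prepare the product probe $\ket{\psi}\otimes\ket{\phi}$ locally (no shared entanglement needed), apply the unknown unitary, and invoke the Walgate--Short--Hardy--Vedral theorem, which guarantees that any two orthogonal bipartite pure states can be perfectly discriminated by one-way LOCC.

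For the \emph{``only if''} direction, assume $X_1$ and $X_2$ admit a perfect local discrimination protocol. Such a protocol must start from a product probe $\ket{\psi}\otimes\ket{\phi}$, and since LOCC is a subclass of general bipartite quantum operations, LOCC-based discrimination can be no better than an optimal global measurement applied to $X_1(\ket{\psi}\otimes\ket{\phi})$ and $X_2(\ket{\psi}\otimes\ket{\phi})$. The Holevo--Helstrom bound then forces these two output pure states to be orthogonal, which is exactly the statement $\langle\psi\otimes\phi|X_2^{\dagger}X_1|\psi\otimes\phi\rangle=0$, and therefore $0\in N^{\otimes}(X_2^{\dagger}X_1)$.

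The only non-trivial ingredient is the Walgate--Short--Hardy--Vedral theorem used in the \emph{``if''} direction; the converse is immediate from Holevo--Helstrom, just as in Proposition~\ref{prop:discrim}. Consequently I do not expect any conceptual obstacle beyond invoking these two known results, and a clean one-paragraph proof citing \cite{Duan2008localNR} should suffice.
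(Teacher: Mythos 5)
The paper does not actually prove this proposition; it states it as a known result and attributes it to \cite{Duan2008localNR}, so your attempt is being measured against the literature rather than against an in-paper argument. Your ``if'' direction is sound: a product witness of $0\in N^{\otimes}(X_2^{\dagger}X_1)$ produces two orthogonal bipartite pure outputs, and the Walgate--Short--Hardy--Vedral theorem indeed lets the two parties finish the job by LOCC.

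The ``only if'' direction, however, has a genuine gap: you assert that a local discrimination protocol ``must start from a product probe $\ket{\psi}\otimes\ket{\phi}$,'' and everything downstream depends on that. A general LOCC protocol allows each party a local ancilla, so the probe can be of the form $\ket{\Psi}_{AA'}\otimes\ket{\Phi}_{BB'}$ with entanglement inside each lab; it can also be a mixed separable state correlated through prior classical communication, and the protocol may be adaptive. Writing the Schmidt decompositions $\ket{\Psi}=\sum_i\sqrt{p_i}\ket{a_i}\ket{a_i'}$ and $\ket{\Phi}=\sum_j\sqrt{q_j}\ket{b_j}\ket{b_j'}$, orthogonality of the outputs only gives $\sum_{i,j}p_iq_j\langle a_ib_j|X_2^{\dagger}X_1|a_ib_j\rangle=0$, i.e.\ $0\in\operatorname{conv}N^{\otimes}(X_2^{\dagger}X_1)$ --- and, as the paper stresses immediately after this proposition, the local numerical range is \emph{not} convex, so the Toeplitz--Hausdorff shortcut that disposes of ancillas in the global case (Proposition~\ref{prop:discrim}) is unavailable here. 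Showing that ancillas, separable mixtures, and adaptivity buy nothing, i.e.\ that $0$ in the convex hull can be upgraded to $0$ in $N^{\otimes}(X_2^{\dagger}X_1)$ itself for the unitary $X_2^{\dagger}X_1$, is precisely the nontrivial content of \cite{Duan2008localNR}. Your sketch silently assumes this reduction rather than proving it, so as written it establishes only the easy half of the equivalence.
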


For LDOI matrices, we now illustrate how one can obtain an expression for the local numerical range in terms of the defining matrix triple $(A,B,C)$.

\begin{proposition}
For $X^{(3)}_{(A,B,C)}\in \LDOI_d$, the following expression holds:
\begin{equation*}
    N^{\otimes}\left(X^{(3)}_{(A,B,C)}\right) = \left\{ \langle v\odot \overbar v | A | w\odot \overbar w \rangle + \langle v\odot w | \widetilde B | v\odot w \rangle + \langle v\odot \overbar w | \widetilde C | v\odot \overbar w \rangle \, : \, \ket{v},\ket{w}\in \C{d}, ||v||=||w||=1 \right\},
\end{equation*}
where $\widetilde B = B-\operatorname{diag}B$, $\widetilde C=C-\operatorname{diag}C$, and $\odot$ represents entrywise multiplication of vectors.
\end{proposition}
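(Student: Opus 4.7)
The approach is to compute the matrix element $\bra{\psi\otimes\phi} X^{(3)}_{(A,B,C)} \ket{\psi\otimes\phi}$ directly in the computational basis and then recognize the three resulting coefficient sums as the three expressions appearing on the right-hand side of the claim.

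First, I would write $\ket{\psi}=\sum_{i=1}^d v_i\ket{i}$ and $\ket{\phi}=\sum_{j=1}^d w_j\ket{j}$ for unit vectors $\ket{v},\ket{w}\in\C{d}$, so that $\ket{\psi\otimes\phi}=\sum_{i,j} v_i w_j \ket{ij}$. Substituting the explicit decomposition from Eq.~\eqref{eq:ABC_LDOI} into the expectation value and using $\braket{kl}{ij}=\delta_{ki}\delta_{lj}$, straightforward bookkeeping gives
\begin{align*}
\bra{\psi\otimes\phi}X^{(3)}_{(A,B,C)}\ket{\psi\otimes\phi} = \sum_{i,j=1}^d A_{ij}|v_i|^2|w_j|^2 + \sum_{i\neq j} B_{ij}\,\overbar{v_i w_i}\, v_j w_j + \sum_{i\neq j} C_{ij}\,\overbar{v_i w_j}\, v_j w_i.
\end{align*}

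The remaining task is to match each sum with one of the terms in the claim. The first sum is exactly $\bra{v\odot\overbar v}\,A\,\ket{w\odot\overbar w}$, since the entries of $v\odot\overbar v$ are $|v_i|^2$ and those of $w\odot\overbar w$ are $|w_j|^2$. The second sum runs only over $i\neq j$, so $B$ may be replaced by $\widetilde B = B-\operatorname{diag}B$, producing the quadratic form $\bra{v\odot w}\,\widetilde B\,\ket{v\odot w}$ in the vector with components $(v\odot w)_i=v_i w_i$. For the third sum, a one-line rearrangement $\overbar{v_i w_j}\,v_j w_i = \overbar{v_i\overbar{w_i}}\,v_j\overbar{w_j}$, together with replacing $C$ by $\widetilde C$ (permissible since $i\neq j$), yields $\bra{v\odot\overbar w}\,\widetilde C\,\ket{v\odot\overbar w}$. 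Taking the union over all unit $\ket{v},\ket{w}\in\C{d}$ then reproduces the stated description of $N^{\otimes}(X^{(3)}_{(A,B,C)})$.

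There is no substantive obstacle here: the entire argument is a direct basis expansion and term matching. The only minor point to be careful about is the asymmetric way $A$, $\widetilde B$, and $\widetilde C$ appear. This reflects the convention of Eq.~\eqref{eq:ABC_LDOI}, where the $B$ and $C$ summands are restricted to indices $i\neq j$ (so their diagonal entries never contribute), while the $A$ summand carries the common diagonal $\operatorname{diag}A=\operatorname{diag}B=\operatorname{diag}C$ in full.
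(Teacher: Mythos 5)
Your proposal is correct and is essentially identical to the paper's proof: a direct expansion of $\langle v\otimes w|X^{(3)}_{(A,B,C)}|v\otimes w\rangle$ in the computational basis using Eq.~\eqref{eq:ABC_LDOI}, followed by recognizing the three sums as the quadratic forms in $A$, $\widetilde B$, and $\widetilde C$. The index rearrangement for the $C$-term and the remark about why $\widetilde B,\widetilde C$ (rather than $B,C$) appear match the paper's computation exactly.
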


\begin{proof}
A simple calculation reveals that for any $\ket{v},\ket{w}\in \C{d}$ (see Eq.~\eqref{eq:ABC_LDOI}),
\begin{align*}
    \langle v \otimes w | X^{(3)}_{(A,B,C)}|v\otimes w\rangle &= \sum_{i,j} A_{ij} |v_i|^2 |w_j|^2 + \sum_{i\neq j} \left(B_{ij} \overbar{v_i w_i} v_j w_j + C_{ij} \overbar{v_i} w_i v_j \overbar{w_j} \right) \\
    &= \langle v\odot \overbar v | A | w\odot \overbar w \rangle + \langle v\odot w | \widetilde B | v\odot w \rangle + \langle v\odot \overbar w | \widetilde C | v\odot \overbar w \rangle
\end{align*}

\end{proof}

Many of the nice properties of the normal numerical range $N(X)$, such as convexity, no longer hold for the local numerical range $N^{\otimes}(X)$ \cite{Duan2008localNR,Puchaa2011localNR}, thus making the problem of local distinguishability more difficult. We expect that the same difficulties would carry over to the LDOI case as well. While respecting the LDOI unitarity constraints on matrix triples $(A,B,C)$ (see Proposition~\ref{prop:unitary-condition}), we leave the problem of determining if $0\in N^{\otimes}\left(X^{(3)}_{(A,B,C)}\right)$ open for the readers.

\section{Conclusion}
Time evolution of isolated quantum systems is governed by unitary operations on the state space of the system. Hence, studying the structure of such evolutions is crucial to the understanding of the dynamical aspects of quantum theory. In this paper, we have looked at the intersection of the bipartite $d\otimes d$ unitary group with the symmetry class of local diagonal unitary/orthogonal invariant (LDUI/LDOI) matrices. We have completely characterized this intersection as a subgroup of the full bipartite unitary group. We have further studied special classes of unitary matrices -- such as dual and PT unitary matrices -- within the said intersection. Notably, we have constructed explicit parametrizations of several large families of LDOI dual unitary matrices in arbitrary dimensions. This is interesting because at the moment, explicit examples of dual unitary matrices are very scarce in the literature. Furthermore, we have shown that although no perfect unitary matrices exist in the LDOI class, it is still possible to get arbitrarily close to such unitary matrices while staying within the LDOI class in the limit of $d\rightarrow \infty$. Our analysis of the nonlocal properties of LDOI unitary matrices reveal that any arbitrary bipartite unitary matrix can be simulated by a real orthogonal LDOI matrix via (stochastic) local operations and classical communication with non-zero probability. Finally, we have found simple expressions of several natural measures of non-locality -- such as the operator Schmidt rank, entangling power, and gate typicality -- for unitary matrices in the LDOI class. Intriguingly, LDUI unitary matrices with maximum entanglement generation capacity are shown to be in one-to-one correspondence with complex Hadamard matrices.

Several open problems stem from our research, some of which are listed below. \begin{itemize}
    \item It would be interesting to obtain a complete parameterization of matrix triples $(A,B,C)$ that satisfy the LDOI dual unitarity conditions as stated in Theorem~\ref{prop:dual}.
    \item An analogue of Theorem~\ref{theorem:maxent_dualLDUI} for the larger class of LDOI dual unitary matrices would be desirable to obtain. Since the constraints defining the LDOI dual unitary class are not well understood at the moment, we expect that maximizing the entangling power while respecting these constraints would be difficult.
    \item Further analysis of the non-local properties of LDOI unitary operators is required to ascertain whether, for instance, universal entanglers \cite{Chen2008universal} exist within the LDOI class. 
    \item It would be compelling to properly analyze the function $\mathfrak{h}: \mathcal M_d(\{\pm 1\})\to \mathbb{R}$ (see Eq.~\eqref{eq:Hada_measure}) as a measure of the ``Hadamardness'' of sign matrices.
    \item A detailed analysis of the product numerical range of LDOI operators is needed to fully understand the problem of local distinguishability of LDOI unitary operators.
\end{itemize}

\appendix

\section{Proof of Theorem~\ref{theorem:opschm_LDOI}}

\noindent\textit{Proof of Theorem~\ref{theorem:opschm_LDOI} continued}. For dimensions $d=3,4$, we now explicitly construct LDOI matrices with Schmidt ranks in the set $\{d+1,d+2,\ldots ,d^2-d-1\}$. We will always take $B=\operatorname{diag}A=\operatorname{diag}B$ unless stated otherwise. For all the constructed matrices, Eq.~\eqref{eq:LDOI_opschm} can be readily used to compute the Schmidt ranks. Let us first deal with the $d=3$ case.
\begin{align*}
   A = \left(\begin{array}{c r r}
   \frac{1}{\sqrt{2}} & \phantom{-}\frac{1}{\sqrt{2}} & -1 \\
   \frac{1}{\sqrt{2}} & \frac{1}{\sqrt{2}} & 1 \\
   1 & -1 &  1
\end{array} \right), \,\, B = C = \left(\begin{array}{c r r} 
   \frac{1}{\sqrt{2}} & -\frac{1}{\sqrt{2}} & 0 \\
   \frac{1}{\sqrt{2}} & \frac{1}{\sqrt{2}} & 0 \\
   0 & 0 & \phantom{-}1
\end{array} \right)  &\implies \Omega\left(X^{(3)}_{(A,B,C)}\right) = 4, \\
A = \left(\begin{array}{c r r}
   1 & 0 & 1 \\
   0 & 1 & 1 \\
   1 & 1 & 1
\end{array} \right), \,\, C = \left(\begin{array}{c r r}
   1 & 1 & 0 \\
   1 & 0 & 0 \\
   0 & 0 & 1 
\end{array} \right) &\implies \Omega\left(X^{(3)}_{(A,B,C)}\right) = 5.
\end{align*}
The constructions for $d=4$ are going to be slightly more tedious.
\begin{align*}
A = \left(\begin{array}{r r r r}
   -1 & 0 & 1 & 1  \\
   0 & -1 & 1 & 1  \\
   1 & 1 & -1 & -1  \\
   1 & 1 & 1 & 1  
\end{array} \right), \,\,  C = \left(\begin{array}{r r r r}
   -1 & 1 & 0 & 0 \\
   1 & -1 & 0 & 0 \\
   0 & 0 & -1 & 0 \\
   0 & 0 & 0 & \phantom{-}1 
\end{array} \right) &\implies \Omega\left(X^{(3)}_{(A,B.C)}\right) = 5.  \\
A = \left(\begin{array}{r r r r}
   -1 & 0 & 1 & 1  \\
   0 & -1 & 1 & 1  \\
   1 & 1 & -1 & 1  \\
   1 & 1 & 1 & -1  
\end{array} \right), \,\,  C = \left(\begin{array}{r r r r}
   -1 & 1 & 0 & 0 \\
   1 & -1 & 0 & 0 \\
   0 & 0 & -1 & 0 \\
   0 & 0 & 0 & -1 
\end{array} \right) &\implies \Omega\left(X^{(3)}_{(A,B.C)}\right) = 6. \\
A = \left(\begin{array}{r r r r}
   -1 & 0 & 0 & 1  \\
   0 & -1 & -1 & 1  \\
   0 & 1 & 1 & 1  \\
   1 & 1 & 1 & -1  
\end{array} \right), \,\,  C = \left(\begin{array}{r r r r}
   -1 & 1 & 1 & 0 \\
   1 & -1 & 0 & 0 \\
   1 & 0 & \phantom{-}1 & 0 \\
   0 & 0 & 0 & -1 
\end{array} \right) &\implies \Omega\left(X^{(3)}_{(A,B.C)}\right) = 7.  \\
A = \left(\begin{array}{r r r r}
   -1 & 0 & 0 & 1  \\
   0 & -1 & 1 & 1  \\
   0 & 1 & -1 & 1  \\
   1 & 1 & 1 & -1  
\end{array} \right), \,\,  C = \left(\begin{array}{r r r r}
   -1 & 1 & 1 & 0 \\
   1 & -1 & 0 & 0 \\
   1 & 0 & -1 & 0 \\
   0 & 0 & 0 & -1 
\end{array} \right) &\implies \Omega\left(X^{(3)}_{(A,B.C)}\right) = 8. \\
A = \left(\begin{array}{r r r r}
   -1 & 0 & 0 & 0  \\
   0 & -1 & 1 & 1  \\
   0 & 1 & -1 & -1  \\
   0 & 1 & 1 & 1  
\end{array} \right), \,\,  C = \left(\begin{array}{r r r r}
   -1 & 1 & 1 & 1 \\
   1 & -1 & 0 & 0 \\
   1 & 0 & -1 & 0 \\
   1 & 0 & 0 & \phantom{-}1 
\end{array} \right) &\implies \Omega\left(X^{(3)}_{(A,B.C)}\right) = 9. \\
A = \left(\begin{array}{r r r r}
   -1 & 0 & 0 & 0  \\
   0 & -1 & 1 & 1  \\
   0 & 1 & -1 & 1  \\
   0 & 1 & 1 & -1  
\end{array} \right), \,\,  C = \left(\begin{array}{r r r r}
   -1 & 1 & 1 & 1 \\
   1 & -1 & 0 & 0 \\
   1 & 0 & -1 & 0 \\
   1 & 0 & 0 & -1 
\end{array} \right) &\implies \Omega\left(X^{(3)}_{(A,B.C)}\right) = 10.
\end{align*}
To obtain Schmidt rank $11$, we must break the above pattern and make $B$ non-diagonal.

\begin{align*}
    A = \left(\begin{array}{r r r r}
   \frac{1}{\sqrt{2}} & \phantom{-}\frac{1}{\sqrt{2}} & 0 & 0  \\
   \frac{1}{\sqrt{2}} & \frac{1}{\sqrt{2}} & 0 & 1  \\
   0 & 0 & 1 & 1  \\
   0 & 1 & \phantom{-}1 & \phantom{-}1  
\end{array} \right), \,\,  B = \left(\begin{array}{r r r r}
   \frac{1}{\sqrt{2}} & -\frac{1}{\sqrt{2}} & 0 & 0  \\
   \frac{1}{\sqrt{2}} & \frac{1}{\sqrt{2}} & 0 & 0  \\
   0 & 0 & \phantom{-}1 & 0 \\
   0 & 0 & 0 & \phantom{-}1 
\end{array} \right), \,\,
&C = \left(\begin{array}{r r r r}
   \frac{1}{\sqrt{2}} & -\frac{1}{\sqrt{2}} & 1 & 1  \\
   \frac{1}{\sqrt{2}} & \frac{1}{\sqrt{2}} & 1 & 0  \\
   1 & 1 & \phantom{-}1 & 0 \\
   1 & 0 & 0 & \phantom{-}1 
\end{array} \right) \\ 
&\implies \Omega\left(X^{(3)}_{(A,B.C)}\right) = 11. \qed
\end{align*}

\bigskip

\bibliographystyle{alpha}
\bibliography{references}

\vspace{1cm}
\bigskip
\bigskip

\hrule
\bigskip

\end{document}